\documentclass[aps,pra,superscriptaddress,amsmath,amssymb,twocolumn,footinbib]{revtex4-1}

\usepackage[T1]{fontenc}
\usepackage[utf8]{inputenc}

\usepackage{graphicx}
\usepackage[dvipsnames,svgnames]{xcolor}
\usepackage{braket}
\usepackage{newtxtext,newtxmath}

\usepackage{bbm,amsthm,amsmath}
\usepackage{enumerate}
\usepackage{subcaption,caption}
\usepackage{booktabs,multirow}
\usepackage{mathtools,mathrsfs}
\usepackage[percent]{overpic}
\usepackage{microtype,hyphenat}
\usepackage{array,adjustbox,afterpage,breakurl}

\newtheoremstyle{red}{}{}{\itshape}{}{\color{red!80!black}\bfseries}{.}{ }{}

\definecolor{darkred}{rgb}{0.57,0,0.12}
\usepackage[colorlinks=true, linkcolor=MidnightBlue, urlcolor=darkred!75!black, citecolor=MidnightBlue, breaklinks=true]{hyperref}

\newcommand{\nc}{\newcommand}
\usepackage{cleveref}
\usepackage[most,breakable]{tcolorbox}

\AtBeginDocument{
\heavyrulewidth=.08em
\lightrulewidth=.05em
\cmidrulewidth=.03em
\belowrulesep=.65ex
\belowbottomsep=0pt
\aboverulesep=.4ex
\abovetopsep=0pt
\cmidrulesep=\doublerulesep
\cmidrulekern=.5em
\defaultaddspace=.5em
}

\usepackage{etoolbox}
\hypersetup{linktoc=all}

\nc{\ketbra}[2]{\ket{#1}\bra{#2}}

\DeclareMathOperator{\Tr}{Tr}

\DeclareMathOperator{\NN}{\mathbb{N}}

\DeclareMathOperator{\rank}{rank}

\addtolength{\textheight}{3mm}

\newcommand{\norm}[2]{\left\lVert#1\right\rVert_{\,#2}}
\newcommand{\proj}[1]{\ket{#1}\!\bra{#1}}

\nc{\TT}[1]{T^{({#1})}_\I}
\nc{\TTJ}[1]{T^{({#1})}_\J}

\newcommand{\T}{\mathcal{T}}

\newcommand{\I}{\mathcal{I}}

\newcommand{\<}{\left\langle}
\renewcommand{\>}{\right\rangle}

\renewcommand{\bar}{\;\rule{0pt}{9.5pt}\right|\;}

\newcommand{\lset}{\left\{\left.}
\newcommand{\rset}{\right\}}

\newcommand{\DD}{\mathbb{D}}

\newcommand{\ve}{\varepsilon}

\newcommand{\cbraket}[1]{\left|\braket{#1}\right|}

\newcommand{\id}{\mathbbm{1}}

\newcommand{\J}{\mathcal{J}}

\nc{\PPTPPR}{\text{\rm PPTP}_+}
\nc{\PPTPR}{\text{\rm PPTP}}

\nc{\ppt}{\text{\rm\sffamily PPT}}
\nc{\pptp}{\text{\rm\sffamily PPT}_{+}}

\newcommand{\PPT}{\ppt}

\nc{\PPTRP}{{{\PPT^{\hspace{0.1em}\prime}_+}}}
\nc{\PPTRPPR}{{\text{\rm PPTP}^{\hspace{0.1em}\prime}_+}}

\nc{\wt}{\widetilde}

\newcommand*{\cE}{\mathcal{E}}
\newcommand*{\cH}{\mathcal{H}}

\newcommand*{\cL}{\mathcal{L}}

\newcommand*{\cV}{\mathcal{V}}
\newcommand{\vect}[1]{\mathbf{#1}}
\def\gbm#1{{\let\phi\upphi \let\lambda\uplambda \let\mu\upmu \let\rho\uprho \let\sigma\upsigma \let\tau\uptau \let\theta\uptheta \let\eta\upeta \let\kappa\upkappa \bm{#1}}}
\newcommand{\vecg}[1]{\gbm{#1}}
\newcommand\precw{\mathrel{\stackrel{\makebox[0pt]{\mbox{\normalfont\tiny weak}}}{\prec}}}
\newcommand{\be}{\begin{equation}}
\newcommand{\ee}{\end{equation}}
\newcommand{\n}{\textendash}

\newcommand*{\bbC}{\mathbb{C}}

\newcommand*{\fr}[2]{\frac{#1}{#2}}

\nc{\logfloor}[1]{\left\lfloor {#1} \right\rfloor_{\log}}
\nc{\logceil}[1]{\left\lceil {#1} \right\rceil_{\log}}

\newtheorem{theorem}{Theorem}
\newtheorem{proposition}[theorem]{Proposition}
\newtheorem{corollary}[theorem]{Corollary}
\newtheorem{definition}[theorem]{Definition}

\newtheorem{thm}{Theorem}

\newtheorem{obsm}[thm]{Observation}

\theoremstyle{red}

\theoremstyle{definition}
\newtheorem*{remark}{Remark}

\let\oldproofname\proofname
\renewcommand{\proofname}{\rm\bf{\oldproofname}}

\let\nc\newcommand
  \nc{\MIO}{{\text{\rm MIO}}}
\nc{\DIO}{{\text{\rm DIO}}}
\nc{\SIO}{{\text{\rm SIO}}}
\nc{\IO}{{\text{\rm IO}}}
\nc{\lsetr}{\left\{\,}
\nc{\rsetr}{\right.\right\}}
\nc{\barr}{\,\rule{0pt}{9.5pt}\left|\;}

\nc{\rhodio}{{\rho\text{\rm-DIO}}}
\nc{\psidio}{{\psi\text{\rm-DIO}}}
\nc{\psimdio}{{\Psi_m\text{\rm-DIO}}}
\newcommand{\mlt}{\prec}

\begin{document}

\title{Coherence manipulation with dephasing-covariant operations}

\author{Bartosz Regula}
\email{bartosz.regula@gmail.com}
\affiliation{School of Physical and Mathematical Sciences, Nanyang Technological University, 637371, Singapore}
\affiliation{Complexity Institute, Nanyang Technological University, 637335, Singapore}

\author{Varun Narasimhachar}
\email{nvarun@ntu.edu.sg}
\affiliation{School of Physical and Mathematical Sciences, Nanyang Technological University, 637371, Singapore}
\affiliation{Complexity Institute, Nanyang Technological University, 637335, Singapore}

\author{Francesco Buscemi}
\email{buscemi@i.nagoya-u.ac.jp}
\affiliation{Graduate School of Informatics, Nagoya University, Chikusa-ku, 464-8601 Nagoya, Japan}

\author{Mile Gu}
\email{mgu@quantumcomplexity.org}
\affiliation{School of Physical and Mathematical Sciences, Nanyang Technological University, 637371, Singapore}
\affiliation{Complexity Institute, Nanyang Technological University, 637335, Singapore}
\affiliation{Centre for Quantum Technologies, National University of Singapore, 3 Science Drive 2, 117543, Singapore}

\begin{abstract}%
We characterize the operational capabilities of quantum channels which can neither create nor detect quantum coherence vis-\`{a}-vis efficiently manipulating coherence as a resource. We study the class of dephasing-covariant operations (DIO), unable to detect the coherence of any input state, as well as introduce an operationally-motivated class of channels $\rho$-DIO which is tailored to a specific input state.
We first show that pure-state transformations under DIO are completely governed by majorization, establishing necessary and sufficient conditions for such transformations and adding to the list of operational paradigms where majorization plays a central role. We then show that $\rho$-DIO are strictly more powerful: although they cannot detect the coherence of the input state $\rho$, the operations $\rho$-DIO can distill more coherence than DIO. However, the advantage disappears in the task of coherence dilution as well as generally in the asymptotic limit, where both sets of operations achieve the same rates in all transformations.
\end{abstract}

\maketitle

\section{Introduction}
Quantum coherence, or superposition, is an intrinsic feature of quantum mechanics that underlies the advantages enabled by quantum information processing and quantum technologies~\cite{streltsov_2017}. The resource theory of quantum coherence~\cite{aberg_2006,baumgratz_2014,winter_2016,streltsov_2017} has found extensive use in the characterization of our ability to manipulate coherence efficiently within a rigorous theoretical framework, wherein the properties of a resource are investigated under a suitable set of allowed (``free'') operations which reflect the constraints placed on the manipulation of the given resource~\cite{horodecki_2012,chitambar_2019}. Despite many promising developments in the establishment of a comprehensive description of coherence, the physical constraints governing its manipulation are unsettled~\cite{chitambar_2016,marvian_2016,streltsov_2017}, and one of the most important outstanding questions in the resource theory of quantum coherence is to understand the exact properties of the different operational paradigms under which coherence can be investigated~\cite{winter_2016,du_2015, yuan_2015, yadin_2016, chitambar_2016, marvian_2016, chitambar_2016-1, vicente_2017, liu_2017,zhu_2017,streltsov_2017-2, zhao_2018, regula_2017, chitambar_2018, egloff_2018, fang_2018, theurer_2019, regula_2018-1, zhao_2019, lami_2019-1,lami_2019}.

Many proposed types of free operations stem from meaningful physical considerations regarding their implementability: for instance, the physically incoherent operations~\cite{chitambar_2016} only require the use of incoherent ancillary systems and incoherent measurements, making them cheap and easy to implement in the resource-theoretic setting; the strictly incoherent operations (SIO)~\cite{winter_2016,yadin_2016} allow for a similar implementation with incoherent ancillae, but require arbitrary measurements. Regrettably, these operations were found to be too limited in their operational capabilities \cite{vicente_2017,lami_2019-1,lami_2019}, suggesting that any non-trivial resource theory of coherence would require a larger set of allowed maps in order to be practically useful. Common choices of such larger sets, however, are often too permissive and lack a physical justification~\cite{chitambar_2016}.

The question of ``freeness'' of operations within the resource theory of coherence has also been approached from a different perspective, by characterizing the nonclassicality of the channels themselves --- first in a very general manner~\cite{meznaric_2013}, and later with explicit applications to coherence theory~\cite{yadin_2016,liu_2017,theurer_2019}. These works suggest that the most important feature distinguishing truly nonclassical channels is their ability to \textit{detect} the coherence present in a system and use it to manipulate the state, and so any class of channels which is to be considered free must necessarily be incapable of coherence detection. Although the study of maps based on various types of coherence non-detecting properties has attracted significant attention in the recent years \cite{chitambar_2016,marvian_2016,yadin_2016,liu_2017,zhao_2018,regula_2017,fang_2018,chitambar_2018,theurer_2019}, the limits of their power remain undiscovered.

In this work, we explore the limitations imposed on coherence manipulation by the inability of the free operations to create and detect coherence. 
We first study the class of \emph{dephasing-covariant incoherent operations} (DIO)~\cite{chitambar_2016,marvian_2016}, constituting the largest class of operations that do not detect the coherence of any input state. We establish a complete description of pure-state transformations under these operations by relating them with the theory of majorization, revealing also an operational connection between DIO and other classes of free operations. To investigate the ultimate operational limits of coherence non-detecting channels, we then introduce the class of operations $\rho$-DIO, which are tailored to a specific input state and extend the class DIO. In particular, we quantify the capabilities of $\rho$-DIO in the fundamental operational tasks of coherence distillation and dilution. We show in particular that such maps satisfy a curious property: even though $\rho$-DIO cannot detect the coherence of the state $\rho$, they can still \textit{distill} more coherence from $\rho$ than the class DIO; however, this advantage disappears in the one-shot task of dilution as well as generally in the asymptotic limit, where DIO match the capabilities of $\rho$-DIO in all state transformations. Our results give insight into the precise limits on the operational power of free operations which do not use coherence in manipulating coherence as a resource.

\section{DIO and $\rho$-DIO}
We study quantum coherence as a basis-dependent concept. We will therefore fix an orthonormal basis $\{\ket{i}\}_{i=1}^d$ which we deem incoherent, and use $\I$ to denote the set of all states diagonal (incoherent) in this basis. We will use $\Delta(\cdot) = \sum_i \proj{i} \cdot \proj{i}$ to denote the completely dephasing channel in this basis.

As discussed previously, it is natural to impose two conditions for a set of operations to be free: one is its inability to \textit{create} coherence, in the sense that $\sigma \in \I \Rightarrow \Lambda(\sigma) \in \I$, and the other is its inability to \textit{detect} coherence, in the sense that measurement statistics under any incoherent measurement after an operation $\Lambda$ should remain the same regardless of whether the input state possessed any coherence or not; precisely, $\braket{i | \Lambda(\rho) | i} = \braket{i | \Lambda(\Delta(\rho)) | i}$ for all $i$. Imposing only the first constraint leads to the class of maximally incoherent operations (MIO)~\cite{aberg_2006}, which can exhibit undesirable properties such as being able to increase the number of levels of a pure state which are in superposition~\cite{chitambar_2016-1}. The class of channels which satisfy both constraints for all input states is precisely DIO, equivalently defined to commute with the dephasing channel: $\Lambda \circ \Delta(\rho) = \Delta \circ \Lambda (\rho) \; \forall \rho$.
These operations have previously been considered in various contexts \cite{meznaric_2013,chitambar_2016,marvian_2016,liu_2017}, and indeed they admit several interpretations. The maps DIO can be regarded as inherently classical \cite{meznaric_2013,theurer_2019}, as any classical (incoherent) observer is unable to distinguish $\Lambda(\rho)$ from $\Lambda\circ\Delta(\rho)$, and hence is unable to determine whether the coherence of $\rho$ has been employed in the process. The latter point shows that DIO can also be understood as the operations which do not use coherence \cite{yadin_2016,liu_2017}, as the properties of the output system accessible to a classical observer are independent of the coherence of the input.

The ability to detect coherence is of particular importance in practical setups relying on quantum coherence, such as interferometric experiments \cite{yadin_2016,biswas_2017,theurer_2019}. A general interferometric protocol can be understood as consisting of three separate parts: first, a state in superposition is created; second, path-dependent phases are encoded in the state with suitable unitary operations; and third, the information about the paths is extracted in a measurement. It is then explicit that the ability to create (in the first step) and detect (in the last step) coherence are crucial for any such setup to work, and indeed any operation which can neither create nor detect coherence is inherently free and cannot be used in such an experimental protocol.

However, consider now a scenario in which the input coherent state $\rho$ of a protocol is known: the operations which cannot detect the coherence of the input state are then precisely those which satisfy $\Lambda \circ \Delta (\rho) = \Delta \circ \Lambda(\rho)$ for this choice of $\rho$, and indeed it is not necessary to impose dephasing covariance for all inputs if one is concerned with acting on $\rho$ specifically. This point of view motivates us to define the class of $\mathbf{\rho}$\emph{--dephasing-covariant incoherent operations} ($\rho$-DIO), which we take to be the operations which commute with the dephasing channel $\Delta$ for a given input state $\rho$, and which therefore incorporate the ultimate limitations caused by the inability to detect or use the coherence of a particular input state.

It is clear that a $\rho$-DIO map can in principle create or detect coherence when acting on an input state other than $\rho$. However, the definition of $\rho$-DIO is justified whenever one deals with an explicit protocol which transforms a fixed input state to some desired output. Two such protocols form the foundations of the manipulation of coherence as a quantum resource: these are the tasks of \textit{coherence distillation} \cite{winter_2016,regula_2017,zhao_2019}, which aims to convert a given input state to a maximally coherent state, and \textit{coherence dilution} \cite{winter_2016,zhao_2018}, which performs the opposite transformation of a maximally coherent input state to some desired state. The definition of $\rho$-DIO then motivates the question: can the operational capabilities of DIO be surpassed by operations which nevertheless do not detect the coherence of the input state $\rho$? To address this question, we first describe the transformations achievable under DIO, and later investigate whether $\rho$-DIO can outperform the former.

\section{Pure-state transformations under DIO}
Although a fundamental and operationally meaningful choice of operations, the class DIO remains relatively unexplored, and few of its properties are known. Other sets of operations are better understood: in particular, it is known that the transformations of pure states under the classes of incoherent operations (IO) \cite{baumgratz_2014} and strictly incoherent operations (SIO) \cite{winter_2016,yadin_2016} are governed by majorization theory, in a manner similar to the manipulation of pure-state entanglement under local operations and classical communication \cite{nielsen_1999}. Precisely, one has that a pure-state transformation $\ket\psi = \sum_i \psi_i \ket{i} \to \ket\phi = \sum_i \phi_i \ket{i}$ is achievable under IO or SIO if and only if $\Delta(\psi) \mlt \Delta(\phi)$ \cite{chitambar_2016-1,du_2015-1,zhu_2017-1}, i.e. if $\sum_{i=1}^k |\psi_i|^2 \leq \sum_{i=1}^k |\phi_i|^2 \; \forall k \in \{1, \ldots, d\}$ where we assume that the coefficients of the states are arranged so that $|\psi_1| \geq \ldots \geq |\psi_d|$. Our first contribution is to extend this relation to the class DIO.

\begin{theorem}\label{thm:dio_pure}
The deterministic pure-state transformation $\psi \to \phi$ is possible under DIO if and only if $\Delta(\psi) \mlt \Delta(\phi)$.
\end{theorem}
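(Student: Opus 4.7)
My plan is to handle the two implications separately. The ``only if'' direction is essentially immediate: dephasing covariance forces $\Lambda$ to be incoherent, since $\rho=\Delta(\rho)$ implies $\Lambda(\rho)=\Lambda\Delta(\rho)=\Delta\Lambda(\rho)$ is again diagonal. Hence $\DIO\subseteq\MIO$, and the known characterisation of MIO pure-state transformations~\cite{chitambar_2016-1} immediately yields $\Delta(\psi)\prec\Delta(\phi)$. A self-contained alternative is to apply $\Delta$ to both sides of $\Lambda(\psi)=\phi$, obtaining a column-stochastic map carrying $\Delta(\psi)$ to $\Delta(\phi)$, and then to use purity of the output $\phi$ to upgrade this to the Hardy--Littlewood--P\'olya relation.

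For sufficiency I plan an explicit construction based on the T-transform decomposition of majorization (Muirhead / Marshall--Olkin). The hypothesis $\Delta(\psi)\prec\Delta(\phi)$ guarantees a finite chain
\[
\Delta(\phi)=r^{(0)}\succ r^{(1)}\succ\cdots\succ r^{(n)}=\Delta(\psi),
\]
in which each consecutive step is an elementary T-transform that mixes only two coordinates $a,b$. Setting $\zeta_k=\sum_i\sqrt{r^{(k)}_i}\,|i\rangle$ (any residual phases are absorbed into diagonal unitaries, which are trivially DIO), the task reduces to building, for each $k$, a DIO $\Lambda_k$ with $\Lambda_k(\zeta_{k+1})=\zeta_k$ and then composing, since DIO is closed under composition and $\Lambda=\Lambda_0\circ\cdots\circ\Lambda_{n-1}$ then realises $\psi\to\phi$. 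Each $\Lambda_k$ will act non-trivially only on the two-dimensional block $\mathrm{span}\{|a\rangle,|b\rangle\}$ and consist of two monomial Kraus operators: one diagonal, and one that swaps $|a\rangle\leftrightarrow|b\rangle$, each extended by a suitably scaled identity on the complementary levels. The scalar parameters are fixed by demanding that each Kraus operator sends $\zeta_{k+1}$ proportionally onto $\zeta_k$, together with Kraus completeness; a short algebraic calculation shows that these conditions are simultaneously solvable precisely when the two-component majorization inherited from the T-transform step holds. Because both Kraus operators are monomial matrices whose support functions are injective, each individually commutes with $\Delta$ at the level of matrix elements, so $\Lambda_k$ is dephasing-covariant.

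I expect the main technical obstacle to lie in this two-level construction, specifically in extending it to the full Hilbert space while preserving the cross-block coherence of $\zeta_{k+1}$ between $\{a,b\}$ and the untouched levels. This requirement uniquely fixes the identity-on-complement scalings of the two Kraus operators: each must equal the proportionality constant governing the action of $K_i$ on the $\{a,b\}$-block, so that the rank-one relation $K_i|\zeta_{k+1}\rangle\propto|\zeta_k\rangle$ extends from the block to the whole state and Kraus completeness is simultaneously restored. Once this alignment is imposed, all remaining verifications---Choi positivity in each step, dephasing covariance of the extended Kraus operators, and the target state transformation---follow by direct computation, completing the induction.
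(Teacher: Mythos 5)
Your sufficiency argument is acceptable in spirit: it amounts to re-deriving, via T-transforms, the known fact that $\Delta(\psi)\prec\Delta(\phi)$ suffices for an $\SIO$ transformation, and since monomial Kraus operators make every $\SIO$ a $\DIO$, the claim follows. The paper dispenses with this half in one line by citing exactly that inclusion, so your explicit construction is extra work but not a flaw. The genuine gap is in your necessity direction. Your primary route is wrong: $\DIO\subseteq\MIO$ does hold, but pure-state transformations under $\MIO$ are \emph{not} characterized by majorization --- the paper itself recalls that $\MIO$ can increase the number of levels of a pure state in superposition, which is incompatible with $\Delta(\psi)\prec\Delta(\phi)$ (if the coherence rank grows from $r$ to $r'>r$, the first $r$ partial sums of $\Delta(\psi)^{\downarrow}$ already reach $1$ while those of $\Delta(\phi)^{\downarrow}$ fall short). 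So the $\MIO$ characterization cannot deliver the majorization relation, and any proof that only uses the coherence-non-creating half of the $\DIO$ definition is doomed.

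Your fallback sketch misses the same key point. Applying $\Delta$ to $\Lambda(\psi)=\phi$ and using covariance produces a stochastic map carrying $\Delta(\psi)$ to $\Delta(\phi)$, i.e.\ a map in the \emph{source-to-target} direction; even if it were bistochastic it would establish $\Delta(\phi)\prec\Delta(\psi)$, the reverse of what is needed. The actual content of the paper's proof is the construction of a sub-bistochastic matrix acting in the \emph{opposite} direction, $p_x=\sum_y T_{x|y}\,q_y$. This is done by using purity of the output to write $K_i\ket\psi=c_i\ket\phi$ for a fixed normalized $\hat{\vect c}$, and then invoking the Gram-matrix form of the $\DIO$ conditions --- in particular the coherence-non-detecting condition $\sum_i\overline{\bra yK_i\ket x}\bra yK_i\ket{x_1}=S_{y|x}\delta_{xx_1}$, which has no analogue for $\MIO$ or $\IO$ --- to conclude that $T_{x|y}=\left|\left\langle\hat{\vecg\kappa}(y,x),\hat{\vect c}\right\rangle\right|^2$ has rows and columns summing to at most one. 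Saying that purity of the output ``upgrades'' the stochastic relation does not bridge this; since the statement is false for $\MIO$, any correct necessity proof must explicitly use the second $\DIO$ condition, and yours never does.
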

We refer to the Appendix for the full proof of the Theorem.
This establishes DIO as another class of operations in which pure-state transformations are fully governed by majorization theory, and reveals an operational equivalence between DIO, IO, and SIO in manipulating pure states. The equivalence is non-trivial: the class DIO is incomparable with IO~\cite{chitambar_2016-1}, and there exist coherence monotones which can increase under DIO despite always decreasing under the action of SIO/IO \cite{bu_2017-1}.

The Theorem immediately lets us apply a plethora of results to coherence manipulation under DIO. For instance, the recent investigation of moderate-deviation interconversion rates under majorization in \cite{korzekwa_2019,chubb_2019} allows one to precisely characterize DIO transformations beyond the single-shot regime; similarly, a recent investigation of quantum coherence fluctuation relations \cite{morris_2018} relies purely on the theory of majorization, and our result immediately establishes that the results can be directly applied to describe the fluctuations and battery-assisted transformations under DIO.

The result can also be extended to so-called heralded probabilistic transformations, where a state $\ket\psi$ is transformed to one of the states $\{\ket{\phi_j}\}$ with a corresponding probability $p_j$, and the final state is identified unambiguously by a classical flag register; formally, the output states take the form $\phi_j \otimes \proj{j}$. One can similarly show the following.
\begin{proposition}\label{prop:dio_pure_prob}
There exists a DIO effecting the transformation $\psi\to\sum_j p_j \phi_j \otimes\proj j$ if and only if
\begin{equation}\begin{aligned}
	\Delta(\psi) \mlt \sum_j p_j \Delta(\phi_j).
\end{aligned}\end{equation}
\end{proposition}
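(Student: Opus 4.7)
My plan is to reduce the probabilistic statement to the deterministic pure-state case of Theorem~\ref{thm:dio_pure} via a two-step construction for sufficiency and a reversal argument for necessity. The key ingredient on the sufficiency side is an explicit DIO with diagonal Kraus operators that realises the classical-quantum target from a pure intermediate of matched diagonal.

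For the \emph{if} direction, assume $\Delta(\psi)\mlt q:=\sum_j p_j\Delta(\phi_j)$, and let $\ket{\xi}:=\sum_i \sqrt{q_i}\,\ket{i}$, the pure state with $\Delta(\xi)=q$. Theorem~\ref{thm:dio_pure} immediately yields a DIO $\Lambda_1$ with $\Lambda_1(\psi)=\xi$. I would then build a DIO $\Lambda_2$ converting $\xi$ to the target by setting
\begin{equation*}
\Lambda_2(\rho):=\sum_j K_j\rho K_j^\dagger\otimes\ket{j}\bra{j},\quad K_j:=\sqrt{p_j}\sum_{i:q_i>0}\frac{\braket{i|\phi_j}}{\sqrt{q_i}}\,\ket{i}\bra{i},
\end{equation*}
for which the relevant checks are: (i) $\sum_j K_j^\dagger K_j=\id$ via the identity $q_i=\sum_j p_j|\braket{i|\phi_j}|^2$; (ii) each $K_j$ is diagonal in the incoherent basis, so $K_j\Delta(\cdot)K_j^\dagger=\Delta(K_j(\cdot)K_j^\dagger)$, which combined with the orthogonality of the flag projectors forces $\Lambda_2$ to commute with the joint dephasing map, placing $\Lambda_2\in\DIO$; and (iii) the direct computation $K_j\ket{\xi}=\sqrt{p_j}\,\ket{\phi_j}$ gives $\Lambda_2(\xi)=\sum_j p_j\phi_j\otimes\ket{j}\bra{j}$. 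The composition $\Lambda_2\circ\Lambda_1$ is then the desired DIO.

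For the \emph{only if} direction, the plan is to post-compose any DIO $\Lambda$ effecting the transformation with a DIO $\Lambda'$ that coherently recombines the classical-quantum output back into $\xi$, so that $\Lambda'\circ\Lambda$ is a deterministic DIO pure-state transformation $\psi\to\xi$, whereupon Theorem~\ref{thm:dio_pure} immediately forces $\Delta(\psi)\mlt\Delta(\xi)=q$. The main technical obstacle will be the construction of a bona fide DIO $\Lambda'$: a naive ansatz of Kraus operators $M_j=N_j\otimes\bra{j}$ with flag-conditional unitaries $N_j\ket{\phi_j}=\ket{\xi}$ correctly collects $\xi$ at the reduced level but does not by itself commute with dephasing, so the $N_j$ must be chosen so that their output coherences align with the incoherent basis. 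I expect the resolution to mirror the diagonal-Kraus structure appearing in the sufficiency construction, so that necessity becomes the dual of the explicit construction above.
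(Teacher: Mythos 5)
Your sufficiency argument is sound and is actually more self-contained than the paper's, which disposes of that direction in one line by citing the corresponding result for SIO. Your two-step construction (Theorem~\ref{thm:dio_pure} to reach the intermediate pure state $\ket\xi$ with $\Delta(\xi)=\sum_j p_j\Delta(\phi_j)$, followed by the flagged splitting channel) works because each $K_j\otimes\ket j$ is a strictly incoherent Kraus operator, so $\Lambda_2$ is SIO and hence DIO. Two small repairs: $\sum_j K_j^\dagger K_j$ equals the projector onto the support of $\Delta(\xi)$ rather than $\id$ when some $q_i=0$, so you must append a diagonal Kraus operator on the complement; and note that the appendix states the condition as $\vect p\prec\sum_j\eta_j\vect q_j^{\downarrow}$ with each coefficient vector sorted \emph{before} summing, which your construction accommodates by inserting a flag-controlled permutation into each $K_j$.

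The necessity direction, however, is not facing a mere ``technical obstacle'': the recombination channel $\Lambda'$ you hope to build provably does not exist in general, for reasons unrelated to aligning output coherences. If $\Lambda'$ maps the mixture $\sum_j p_j\,\phi_j\otimes\proj j$ to the pure state $\xi$, extremality of pure states forces it to map every branch $\phi_j\otimes\proj j$ to $\xi$ individually. Take $\psi=\ket{+}$ and the ensemble $\left\{\left(\tfrac12,\ket 0\right),\left(\tfrac12,\ket{+}\right)\right\}$, which is reachable under DIO (indeed SIO): the branch $\proj 0\otimes\proj 1$ is incoherent while $\xi$, with $\Delta(\xi)=\mathrm{diag}(\tfrac34,\tfrac14)$, is coherent, so $\Lambda'$ would have to create coherence --- impossible for any DIO, or even any MIO. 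More generally, $\Delta(\phi_j)\prec\Delta(\xi)$ fails whenever a branch is less coherent than the average, so Theorem~\ref{thm:dio_pure} itself forbids the required branchwise maps. The same freedom to apply a flag-controlled permutation to each $\phi_j$ also shows that the literal unsorted condition $\Delta(\psi)\prec\sum_j p_j\Delta(\phi_j)$ cannot be necessary, which is why the appendix carries the down-arrows. The paper instead proves necessity directly at the Kraus level: it writes the Kraus operators in the flag form $K_{j,m}=\sum_{x,y}K_{j,m}(j,y;x)\ket j\otimes\ket y\bra x$, observes that this structure upgrades the DIO inner-product conditions to hold within each $j$-block, reruns the sub-bistochasticity argument of Theorem~\ref{thm:dio_pure} blockwise to obtain $\vect p_j\precw\eta_j\vect q^j$, and sums over $j$. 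You will need an argument of this kind; no reduction to a single deterministic pure-to-pure conversion can succeed.
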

We refer the Reader to the Appendix for details. This again establishes an equivalence between DIO, IO, and SIO in such transformations, extending earlier partial results~\cite{fang_2018}.

\section{Coherence manipulation with $\rho$-DIO}
The existence of a $\rho$-DIO transformation between states $\rho$ and $\sigma$ is equivalent to the existence of a quantum channel $\Lambda$ such that $\Lambda(\rho) = \sigma$ and $\Lambda(\Delta(\rho)) = \Lambda(\Delta(\sigma))$. This has strong connections with the concept of relative majorization \cite{buscemi_2012,buscemi_2017,renes_2016}, and may at first sight suggest that majorization will also play a role in $\rho$-DIO transformations, making them no more powerful than DIO. We will show that this is in fact not the case. To investigate this problem, we now focus on the fundamental tasks of distillation and dilution.

\subsection{Distillation}\label{sec:distillation}
The $\ve$-error one-shot distillable coherence under the class $\rho$-DIO is defined to be the largest size of the maximally coherent state $\ket{\Psi_m} = \sum_i \frac{1}{\sqrt{m}}\ket{i}$ reachable to within an error $\ve$ under a single $\rhodio$ transformation; formally, we have
\begin{equation*}\begin{aligned}
  &C_{d,\rhodio}^{(1),\ve}(\rho) \coloneqq \log \max \lsetr \!m\! \barr \max_{\Lambda \in \rhodio} F(\Lambda(\rho), \Psi_m) \geq 1- \ve\!\rsetr
\end{aligned}\end{equation*}
where $F(\rho,\sigma) = \norm{\sqrt{\rho}\sqrt{\vphantom{\rho}\sigma}}{1}^2$ is the fidelity.
Our first result exactly characterizes this quantity in terms of the hypothesis testing relative entropy $D^\ve_H$, defined as \cite{buscemi_2010,wang_2012,tomamichel_2013}
 \begin{equation*}\begin{aligned}
&D_H^\ve(\rho||\sigma) \!\coloneqq \!-\log\min \{  \Tr M \sigma \;|\; 0\leq M\leq \id, \; 1\!-\! \Tr M \rho \leq\ve \}.
\end{aligned}\end{equation*}
This quantity finds use in the fundamental task of quantum hypothesis testing~\cite{hayashi_2016,hayashi_2017}, where one is interested in distinguishing between two quantum states $\rho$ and $\sigma$ by a measurement $\{M, \id - M\}$, with $D^\ve_H(\rho\|\sigma)$ quantifying exactly the smallest probability of incorrectly accepting the hypothesis of possessing state $\rho$ as true ($\Tr M \sigma$) while constraining the probability of incorrectly accepting the hypothesis of being in possession of state $\sigma$ as true ($\Tr (\id - M) \rho$) to be at most $\ve$. We remark that $D_H^\ve(\rho||\sigma)$ is efficiently computable as a semidefinite program~\cite{dupuis_2012}.

We relate the hypothesis testing relative entropy with distillation in the following.
\begin{theorem}\label{thm:rhodio_dis}
The $\ve$-error one-shot distillable coherence under $\rho$-DIO for any input state $\rho$ is given by
\begin{equation}\begin{aligned}\label{eq:rhodio_dist}
  C_{d,\rhodio}^{(1),\ve}(\rho) = \logfloor{ D^\ve_H (\rho \| \Delta(\rho)) },
\end{aligned}\end{equation}
where $\logfloor{x} \coloneqq \log \lfloor 2^x \rfloor$.
\end{theorem}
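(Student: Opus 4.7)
The plan is to prove both inequalities in \cref{eq:rhodio_dist} by establishing a tight link between $\rhodio$ channels achieving fidelity $1-\ve$ to $\Psi_m$ and feasible measurements in the SDP defining $D^\ve_H(\rho\|\Delta(\rho))$. The pivotal object in both directions is the POVM element $M := \Lambda^\dagger(\ket{\Psi_m}\bra{\Psi_m})$.

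For the converse, I would take any $\Lambda \in \rhodio$ attaining $F(\Lambda(\rho), \Psi_m) \geq 1-\ve$ and set $M := \Lambda^\dagger(\ket{\Psi_m}\bra{\Psi_m})$. Trace preservation of $\Lambda$ forces $0 \leq M \leq \id$, and the fidelity directly gives $\Tr M\rho = \bra{\Psi_m}\Lambda(\rho)\ket{\Psi_m} \geq 1-\ve$. Writing $\Pi_m := \sum_{i=1}^m \proj{i}$, the $\rhodio$ property then yields
\begin{equation*}
\Tr[M\Delta(\rho)] = \Tr[\Psi_m \Delta(\Lambda(\rho))] = \Tr[\Delta(\Psi_m)\Lambda(\rho)] = \tfrac{1}{m}\Tr[\Pi_m\Lambda(\rho)] \leq \tfrac{1}{m},
\end{equation*}
so $M$ is feasible in the program for $D^\ve_H$; hence $\log m \leq D^\ve_H(\rho\|\Delta(\rho))$, and integrality of $m$ upgrades this to $\log m \leq \logfloor{D^\ve_H(\rho\|\Delta(\rho))}$.

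For achievability, I would take an optimal feasible $M$ with $a := \Tr M\rho \geq 1-\ve$ and $b := \Tr M\Delta(\rho) \leq 1/m$, and invert the above correspondence. The tempting measure-and-prepare ansatz $\Tr[M\sigma]\ket{\Psi_m}\bra{\Psi_m} + \Tr[(\id-M)\sigma]\,\omega$ with incoherent $\omega$ fails $\rhodio$, because $\Lambda(\Delta(\rho))$ inherits the off-diagonal part of $\Psi_m$ with weight $b$ while $\Delta(\Lambda(\rho))$ contains none. The remedy is to counter-bias the failure state, taking
\begin{equation*}
Y := \frac{1}{1-b}\!\left(\frac{\Pi_m}{m} - b\ket{\Psi_m}\bra{\Psi_m}\right),
\end{equation*}
which is a valid state precisely because $b \leq 1/m$, and setting $\Lambda(\sigma) := \Tr[M\sigma]\ket{\Psi_m}\bra{\Psi_m} + \Tr[(\id-M)\sigma]\,Y$. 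A direct computation confirms $\Lambda(\Delta(\rho)) = \Pi_m/m = \Delta(\Lambda(\rho))$, certifying $\Lambda \in \rhodio$; rewriting $\Lambda(\rho) = \tfrac{a-b}{1-b}\ket{\Psi_m}\bra{\Psi_m} + \tfrac{1-a}{1-b}\Pi_m/m$ gives $F(\Lambda(\rho), \Psi_m) = 1 - \tfrac{(1-a)(m-1)}{m(1-b)}$, and the inequalities $1-a \leq \ve$ together with $1-b \geq (m-1)/m$ then imply $F \geq 1-\ve$.

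The main obstacle is recognising the correct form of the failure state: the $\rhodio$ constraint dictates a precise cancellation between the coherence leaked into $\Lambda(\Delta(\rho))$ by the success branch and the pre-subtracted coherence carried by $Y$, and the hypothesis-testing threshold $b \leq 1/m$ is exactly what guarantees the positivity of the correction. Once this ansatz is in hand, the remaining verifications---positivity of $Y$, the CPTP property of $\Lambda$, the dephasing commutation on $\rho$, and the fidelity lower bound---are all short calculations.
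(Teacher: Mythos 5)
Your proof is correct, and it reaches the result by a genuinely different route from the paper. The paper first symmetrizes: it applies a permutation twirl $\T$ to argue that the optimal protocol can be taken of the form $\Lambda(Q) = \< X, Q\> \Psi_m + \<\id - X, Q\>\frac{\id-\Psi_m}{m-1}$, translates the $\rho$-DIO condition into the linear constraint $\<X,\Delta(\rho)\> = \frac1m$, and reads off the fidelity of distillation as exactly the hypothesis-testing SDP. You instead handle the converse in the Heisenberg picture, extracting the feasible test $M = \Lambda^\dagger(\proj{\Psi_m})$ directly from an arbitrary (unsymmetrized) protocol via $\Tr[M\Delta(\rho)] = \Tr[\Delta(\Psi_m)\Lambda(\rho)] \leq \frac1m$, and handle achievability with an explicit measure-and-prepare ansatz whose counter-biased failure state $Y$ absorbs the coherence that the success branch would otherwise leak into $\Lambda(\Delta(\rho))$. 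The two constructions in fact coincide when $b = \frac1m$ exactly (your $Y$ becomes the paper's $\frac{\id-\Psi_m}{m-1}$ on the output space); the paper deals with the gap left by the floor function by extending $F_{\rhodio}(\rho,m)$ to real $m$, whereas your $Y$ handles $b < \frac1m$ directly. What the paper's twirling buys is a complete structural characterization of all optimal protocols (and an exact SDP for the achievable fidelity at every $m$); what your argument buys is a shorter, more self-contained derivation that makes the operational identification with the hypothesis test $\{M, \id - M\}$ between $\rho$ and $\Delta(\rho)$ completely explicit. The only points worth tightening are cosmetic: positivity of $M$ uses complete (or just) positivity of $\Lambda^\dagger$, not trace preservation alone (which gives only unitality, i.e., $M \leq \id$), and the degenerate case $b=1$ (forcing $m=1$) should be dispatched separately by a constant incoherent-output map.
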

This formally establishes a property of the class of operations $\rhodio$ that one might intuitively expect: the more distinguishable a state $\rho$ is from its dephased version $\Delta(\rho)$, the more coherence we can extract from it using $\rho$-DIO. This is indeed very natural in the framework of quantum coherence, as it gives explicit operational meaning to the information contained in the off-diagonal elements of the density matrix $\rho$. It is instructive to compare Eq.~\eqref{eq:rhodio_dist} with the expression for one-shot distillable coherence under DIO~\cite{regula_2017}, where $D^\ve_H$ additionally has to be optimized over a set of operators, and does not enjoy an exact interpretation in this context.

Of particular importance will be the case $\ve=0$, that is, exact deterministic distillation of coherence. The result then reduces to
\begin{equation}\begin{aligned}\label{eq:dist_zero}
  C_{d,\rhodio}^{(1),0}(\rho) &= \logfloor{ D^0_H (\rho \| \Delta(\rho)) } = \log \left\lfloor \frac{1}{\Tr \Pi_\rho \Delta(\rho)} \right\rfloor
\end{aligned}\end{equation}
where $\Pi_\rho$ is the projection onto the support of $\rho$. In particular, combining the results of Thms.~\ref{thm:dio_pure} and \ref{thm:rhodio_dis}, we have the following.

\begin{corollary}\label{corr:oneshot_dist}
A pure state $\ket\psi = \sum_i \psi_i \ket{i}$ can be deterministically transformed to $\ket{\Psi_m}$ under DIO iff 
 \begin{equation}\begin{aligned}
  \max_i |\psi_i|^2 \leq \frac1m,
 \end{aligned}\vspace*{-.3\baselineskip}\end{equation}
while the transformation is possible under $\psi$-DIO iff
\begin{equation}\begin{aligned}
 \braket{ \psi | \Delta(\psi) | \psi} = \sum_i |\psi_i|^4 \leq \frac1m.
\end{aligned}\end{equation}
\end{corollary}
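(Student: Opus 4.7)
The plan is to derive the two characterizations directly from the results already established, namely Theorem \ref{thm:dio_pure} for the DIO case and Theorem \ref{thm:rhodio_dis} (specifically its $\varepsilon=0$ form, Eq.~\eqref{eq:dist_zero}) for the $\psi$-DIO case. No deep new argument is required; the work is purely to reduce the general conditions to the particularly simple expressions stated.

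For the DIO part, I would apply Theorem \ref{thm:dio_pure} with target state $\ket{\Psi_m}$, so that the transformation is achievable iff $\Delta(\psi) \mlt \Delta(\Psi_m)$. The point is that $\Delta(\Psi_m)$ is the uniform distribution with weight $1/m$ on $m$ entries (and zeros elsewhere if $d>m$), whose ordered partial sums grow linearly as $k/m$ for $k\leq m$ and saturate at $1$ thereafter. Consequently, the majorization condition $\sum_{i=1}^k |\psi_i^\downarrow|^2 \leq k/m$ for all $k\leq m$ is equivalent to the single inequality $|\psi_1^\downarrow|^2 \leq 1/m$: the forward implication is immediate from $k=1$, while the converse follows from $\sum_{i=1}^k |\psi_i^\downarrow|^2 \leq k\,|\psi_1^\downarrow|^2 \leq k/m$. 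This gives the stated condition $\max_i |\psi_i|^2 \leq 1/m$.

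For the $\psi$-DIO part, I would invoke Theorem \ref{thm:rhodio_dis} with $\varepsilon=0$ and $\rho=\proj{\psi}$, yielding
\begin{equation*}
  C_{d,\psidio}^{(1),0}(\psi) = \logfloor{ D^0_H(\psi\|\Delta(\psi))}=\log\left\lfloor \frac{1}{\Tr \Pi_\psi\, \Delta(\psi)}\right\rfloor
\end{equation*}
by Eq.~\eqref{eq:dist_zero}. Since $\psi$ is pure, $\Pi_\psi = \proj{\psi}$, so $\Tr \Pi_\psi\,\Delta(\psi) = \braket{\psi|\Delta(\psi)|\psi} = \sum_i |\psi_i|^4$. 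The exact distillation of $\ket{\Psi_m}$ is then possible iff $\log m \leq \log\lfloor 1/\sum_i |\psi_i|^4 \rfloor$, which for integer $m$ is equivalent to $\sum_i |\psi_i|^4 \leq 1/m$.

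Since the substantive content is already carried by Theorems \ref{thm:dio_pure} and \ref{thm:rhodio_dis}, there is no real obstacle to overcome here; the only delicate point is the elementary observation that majorization by a flat distribution on $m$ levels collapses to a bound on the single largest coefficient, and one should state it explicitly to make clear why the DIO condition takes the particularly clean form $\max_i|\psi_i|^2\leq 1/m$ rather than a full family of majorization inequalities.
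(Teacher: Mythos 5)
Your proposal is correct and follows essentially the same route as the paper: the $\psi$-DIO part is exactly the appendix's ``Exact distillation'' computation (using $\Pi_\psi=\proj\psi$ so that $\Tr\Pi_\psi\Delta(\psi)=\sum_i|\psi_i|^4$, with the floor absorbed by the integrality of $m$), and the DIO part is the intended specialization of Theorem~\ref{thm:dio_pure} to the flat target $\Delta(\Psi_m)$. Your explicit remark that majorization by a uniform distribution on $m$ levels collapses to the single inequality $|\psi_1^\downarrow|^2\leq 1/m$ is a detail the paper leaves implicit, and it is argued correctly.
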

A detailed derivation can be found in the Appendix. The above allows us to easily construct examples of states such that, even though $\ket\psi \to \ket{\Psi_m}$ is impossible under DIO, the transformation can be achieved by $\psidio$. Consider for example the state $\ket\psi \coloneqq \left( \sqrt{\frac58}, \sqrt\frac{3}{16}, \sqrt\frac{3}{16} \right)^T$, for which it can be verified that $\Delta(\Psi_2) \nsucc \Delta(\psi)$, which means the transformation $\ket\psi \to \ket{\Psi_2}$ is impossible by DIO (and in fact by all MIO \cite{regula_2017}). However, we easily compute $\sum_i |\psi_i|^4 = \frac{59}{128} < \frac{1}{2}$, so $C_{d,\psidio}^{(1),0}(\psi) = 1$ and hence one coherence bit $\Psi_2$ can be distilled exactly. This explicitly shows an operational advantage provided by the operations $\rho$-DIO over DIO in state transformations and in particular in coherence distillation. Such an advantage is rather surprising: to any classical observer, the distillation protocol applied to the state $\rho$ is indistinguishable from a classical operation, yet it can distill more coherence than DIO or even the powerful class MIO. Furthermore, due to the majorization condition of Thm.~\ref{thm:dio_pure}, one can also show that $\rho$-DIO has strictly larger capabilities in distillation than even DIO assisted by a pure catalyst state. Furthermore, due to the majorization condition of Thm.~\ref{thm:dio_pure}, it is easy to see that the exact distillation of coherence under DIO cannot be enhanced by the use of a catalyst --- that is, $\psi \otimes \phi \to \Psi_m \otimes \phi$ is possible under DIO if and only if $\psi \to \Psi_m$ is possible --- which then shows that $\rho$-DIO has strictly larger capabilities than even catalysis-assisted DIO.

However, consider now the many-copy scenario in which we can perform joint quantum operations on the composite system $\rho^{\otimes n}$. In the asymptotic limit of independent and identically distributed (i.i.d.) systems, one can then define the distillable coherence under $\rho$-DIO as
\begin{equation}\begin{aligned}
  &C^\infty_{d,\rhodio} (\rho) = \lim_{\ve \to 0} \lim_{n \to \infty} \frac1n C_{d,\rho^{\otimes n}\text{\rm-DIO}}^{(1),\ve}(\rho^{\otimes n}).
\end{aligned}\end{equation}
A simple application of Thm. 1 together with the quantum Stein's lemma \cite{ogawa_2000,ogawa_2004} reveals that we have in fact $C^\infty_{d,\rhodio} (\rho) = D(\rho \| \Delta(\rho))$,
that is, the relative entropy of coherence $D(\rho \| \Delta(\rho))$ characterizes the asymptotic rate of coherence distillation under $\rho$-DIO. But it is known already that under DIO we also have $C^\infty_{d,DIO} (\rho) =  D(\rho \| \Delta(\rho))$ \cite{regula_2017,chitambar_2018},
which means that $\rho$-DIO do not perform any better than DIO in the asymptotic limit. Taking into consideration the operational gap between the operations DIO and $\rho$-DIO in single-shot transformations, the asymptotic result can be quite surprising, since it effectively shows that the advantage provided by $\rho$-DIO over DIO will be relatively minor and will disappear completely at the asymptotic level.

Finally, one can define the zero-error distillable coherence as
\begin{equation}\begin{aligned}
  C^{\infty,0}_{d,\rhodio} (\rho) = \lim_{n \to \infty} \frac1n C_{d,\rho^{\otimes n}\text{\rm-DIO}}^{(1),0}(\rho^{\otimes n}).
\end{aligned}\end{equation}
Noting the additivity of $D^0_H(\rho\|\Delta(\rho))$, from Eq.~\eqref{eq:dist_zero} we immediately get that $C_{d,\rhodio}^{\infty,0}(\rho) = - \log \Tr \Pi_\rho \Delta(\rho)$.

\subsection{Dilution}
Consider the transformation of a maximally coherent state $\Psi_m$ into a general state $\rho$, using a $\Psi_m$-DIO protocol. The one-shot coherence cost is given by
\begin{equation*}\begin{aligned}
  &C_{c,\psimdio}^{(1),\ve}(\rho) \!\coloneqq \log \min \lsetr \!m\!  \barr \!\max_{\Lambda \in \psimdio} F(\Lambda(\Psi_m), \rho) \geq \!1-\ve\! \rsetr.
\end{aligned}\end{equation*}
To characterize this quantity, we will consider the coherence monotone based on the max-relative entropy between $\rho$ and $\Delta(\rho)$ \cite{chitambar_2016-1}, given by
\begin{equation*}\begin{aligned}
  R_\Delta(\rho) \coloneqq& \min \lset \lambda \bar \rho \leq (1+\lambda) \Delta(\rho) \rset
\end{aligned}\end{equation*}
It is easy to verify that $R_\Delta(\Lambda(\rho)) \leq R_\Delta(\rho)$ for any $\rho$-DIO operation $\Lambda$. Using this quantity, we have the following.

\begin{theorem}\label{thm:rhodio_dil}
The $\ve$-error one-shot coherence cost under $\Psi_m$-DIO is given by 
 \begin{align*}
   C_{c,\psimdio}^{(1),\ve}(\rho) \!=\! \log \left\lceil \min \lset R_\Delta (\omega) + 1 \!\bar\! \omega \in \DD,\,F(\rho, \omega) \!\geq\! 1\!-\! \ve \rset \right\rceil,\nonumber
\end{align*}
where $\DD$ denotes the set of all density matrices.
\end{theorem}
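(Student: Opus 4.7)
The plan is to reduce the one-shot coherence cost to a characterization of which output states $\omega$ are reachable from $\Psi_m$ by a $\Psi_m$-DIO channel, and to show this reachability is governed exactly by the inequality $R_\Delta(\omega) + 1 \leq m$.

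The central reachability lemma I would establish is: for any state $\omega$, there exists a $\Psi_m$-DIO channel $\Lambda$ with $\Lambda(\Psi_m) = \omega$ if and only if $R_\Delta(\omega) \leq m-1$. For \emph{necessity}, I would use the monotonicity argument that is already foreshadowed in the text. Since $\Delta(\Psi_m) = \id/m$ and the largest eigenvalue of $\Psi_m$ equals $1$, one has $R_\Delta(\Psi_m) = m-1$, i.e.\ $\Psi_m \leq m \Delta(\Psi_m)$; applying $\Lambda$ to both sides of this operator inequality and invoking the defining identity $\Lambda \circ \Delta(\Psi_m) = \Delta \circ \Lambda(\Psi_m)$ yields $\omega \leq m \Delta(\omega)$, as required.

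For \emph{sufficiency}, given any $\omega$ with $\omega \leq m \Delta(\omega)$, I would define the explicit measure-and-prepare channel
\begin{equation*}
  \Lambda(\sigma) = \braket{\Psi_m | \sigma | \Psi_m}\,\omega + \Tr\!\bigl[(\id - \proj{\Psi_m})\sigma\bigr]\,\tau,
\end{equation*}
where $\tau \coloneqq (m\Delta(\omega) - \omega)/(m-1)$. The hypothesis $\omega \leq m \Delta(\omega)$ ensures $\tau \geq 0$, and a quick trace calculation gives $\Tr\tau = 1$, so $\tau$ is a valid state and $\Lambda$ is manifestly CPTP. Direct substitution verifies $\Lambda(\Psi_m) = \omega$ and $\Lambda(\id/m) = \tfrac{1}{m}\omega + \tfrac{m-1}{m}\tau = \Delta(\omega) = \Delta \circ \Lambda(\Psi_m)$, which is precisely the $\Psi_m$-DIO condition.

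With the reachability lemma in hand, the theorem follows by rewriting the coherence cost as
\begin{equation*}
  C_{c,\psimdio}^{(1),\ve}(\rho) = \log \min \lset m \in \bbN \bar \exists\,\omega \in \DD,\; F(\rho,\omega) \geq 1-\ve,\; R_\Delta(\omega) + 1 \leq m \rset,
\end{equation*}
and using compactness of the feasible set (so the infimum of $R_\Delta(\omega)+1$ subject to the fidelity constraint is attained) together with the identity $\min\{m \in \bbN : m \geq x\} = \lceil x \rceil$ to collapse the expression to the stated formula. The principal obstacle is designing the explicit $\Psi_m$-DIO channel in the sufficiency step; the monotonicity calculation and final bookkeeping are routine once that construction is verified.
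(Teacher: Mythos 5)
Your proposal is correct, but it reaches the key reachability lemma ($\omega$ is obtainable from $\Psi_m$ under $\Psi_m$-DIO iff $R_\Delta(\omega)\leq m-1$) by a different route than the paper, and the difference is worth noting. The paper handles the ``only if'' direction structurally: it first invokes a permutation-twirling argument to argue that one may restrict without loss of generality to channels of the two-outcome measure-and-prepare form $\Lambda(Q)=\langle \Psi_m,Q\rangle X+\langle \id-\Psi_m,Q\rangle Z$, derives from dephasing covariance the constraint $\tfrac1m\omega+\tfrac{m-1}{m}\sigma=\Delta(\omega)$ for some state $\sigma$, and then identifies the optimal $m$ with $R_\Delta(\omega)+1$ by citing the identity $g(\omega)=R_\Delta(\omega)$ from the earlier literature. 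You instead prove necessity directly for an \emph{arbitrary} $\Psi_m$-DIO channel by pushing the operator inequality $\Psi_m\leq m\,\Delta(\Psi_m)$ through the positive map $\Lambda$ and using $\Lambda\circ\Delta(\Psi_m)=\Delta\circ\Lambda(\Psi_m)$; this is exactly the monotonicity of $R_\Delta$ that the main text already flags, and it renders both the twirling reduction and the external identity unnecessary. Your sufficiency construction is essentially the same channel as the paper's (a $\{\Psi_m,\id-\Psi_m\}$ measure-and-prepare map), but you make the second preparation $\tau=(m\Delta(\omega)-\omega)/(m-1)$ explicit and verify positivity and the covariance condition by hand. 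The final bookkeeping (attainment by compactness and lower semicontinuity of $R_\Delta$, plus the ceiling) matches the paper. The only point to tidy up is the degenerate case $m=1$, where your $\tau$ is $0/0$; there $\omega\leq\Delta(\omega)$ forces $\omega$ incoherent and the constant channel $\Lambda(\cdot)=\omega$ does the job, so this is a trivial edge case rather than a gap.
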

Interestingly, comparing the above with the results obtained previously for DIO \cite{zhao_2018}, we have that
 \begin{equation}\begin{aligned}
   C_{c,DIO}^{(1),\ve}(\rho) &= C_{c,\psimdio}^{(1),\ve}(\rho);
 \end{aligned}\end{equation}
that is, the operations $\Psi_m$-DIO provide no advantage over DIO whatsoever. Combining this with the fact that the asymptotic coherence cost under DIO is given exactly by $D(\rho\|\Delta(\rho))$ \cite{chitambar_2018}, we similarly have that
\begin{equation}\begin{aligned}
   C^\infty_{d,\psimdio} (\rho) &= \lim_{\ve \to 0} \lim_{n \to \infty} \frac1n C_{c,\Psi_m\text{\rm-DIO}}^{(1),\ve}(\rho^{\otimes n}) = D(\rho\|\Delta(\rho)).
\end{aligned}\end{equation}

We can also note that the zero-error coherence cost under $\Psi_m$-DIO (or DIO) is given exactly by $ \log \lceil R_\Delta(\rho)+1 \rceil$, and noticing the multiplicativity of $R_\Delta + 1$ one can see that the asymptotic zero-error cost of coherence dilution will be given simply by $\log (R_\Delta(\rho) + 1)$.

\subsection{General transformations and monotones}
When discussing asymptotic state transformations, one is in particular interested in the largest rate $R(\rho \to \sigma)$ at which copies of $\rho$ can be transformed to copies of $\sigma$ under the given class of operations. Our results can be used to show that the rate of any such transformation under $\rho$-DIO is completely characterized by the relative entropy between the states and their diagonals. 

To see this, notice first that any transformation $\rho \to \omega \to \sigma$ such that the operation taking $\rho$ to $\omega$ is $\rho$-DIO and the operation taking $\omega$ to $\sigma$ is $\omega$-DIO results in an overall protocol $\rho \to \sigma$ which is $\rho$-DIO. This allows us to employ maximally coherent states $\Psi_m$ as an intermediary in coherence transformations. Using the fact that $\Psi_{2^m} = \proj{+}^{\otimes m}$, we can interpret the distillable coherence $C^\infty_{d,\rhodio} (\rho)$ as the rate $R(\rho \to \proj{+})$ under $\rho$-DIO, and the coherence cost $C^\infty_{c,\psimdio} (\sigma)$ as the rate $1/R(\proj{+} \to \sigma)$ under $\proj{+}$-DIO; a straightforward argument in analogy with~\cite{horodecki_2003-2} then shows the following.
\begin{corollary}
For any states $\rho$ and $\sigma$, the maximal rate of the asymptotic transformation $\rho \to \sigma$ under $\rho$-DIO operations is given by
\begin{equation}\begin{aligned}
   R(\rho \to \sigma) = \frac{D(\rho\|\Delta(\rho))}{D(\sigma\|\Delta(\sigma))}.
 \end{aligned}\end{equation}
\end{corollary}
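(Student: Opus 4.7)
The plan is to sandwich $R(\rho\to\sigma)$ between matching achievability and converse bounds, both equal to $D_\rho/D_\sigma$, where I abbreviate $D_\rho \coloneqq D(\rho\|\Delta(\rho))$ and $D_\sigma \coloneqq D(\sigma\|\Delta(\sigma))$. The achievability part follows the Horodecki-style reversibility recipe of distilling into maximally coherent bits and then diluting into the target, using maximally coherent states as an intermediary. The converse will hinge on the fact that $D(\cdot\|\Delta(\cdot))$ is a $\rho$-DIO monotone, combined with its asymptotic continuity.

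For achievability, fix $\ve>0$ and $n$ large. The asymptotic distillation rate $C^\infty_{d,\rhodio}(\rho) = D_\rho$, which was derived from Thm.~\ref{thm:rhodio_dis} via Stein's lemma, yields a $\rho^{\otimes n}$-DIO map $\Lambda_1$ with $F(\Lambda_1(\rho^{\otimes n}),\Psi_{2^{k_n}}) \to 1$ and $k_n/n \to D_\rho$. The asymptotic dilution rate $C^\infty_{c,\psimdio}(\sigma) = D_\sigma$ obtained from Thm.~\ref{thm:rhodio_dil} yields a $\Psi_{2^{k_n}}$-DIO map $\Lambda_2$ with $F(\Lambda_2(\Psi_{2^{k_n}}),\sigma^{\otimes m_n}) \to 1$ and $m_n/k_n \to 1/D_\sigma$. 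By the composition observation recalled just before the statement, and using monotonicity of trace distance under channels together with the Fuchs--van de Graaf inequalities to control error propagation, the concatenated protocol is (modulo the technical adjustment discussed below) $\rho^{\otimes n}$-DIO and produces a state within vanishing trace distance of $\sigma^{\otimes m_n}$, with $m_n/n \to D_\rho/D_\sigma$. Hence $R(\rho\to\sigma) \geq D_\rho/D_\sigma$.

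For the converse, I first establish that $\rho \mapsto D(\rho\|\Delta(\rho))$ is monotone under $\rho$-DIO: if $\Lambda$ is $\rho$-DIO then by the defining commutation and the data-processing inequality,
\[
 D(\Lambda(\rho)\,\|\,\Delta(\Lambda(\rho))) = D(\Lambda(\rho)\,\|\,\Lambda(\Delta(\rho))) \leq D(\rho\,\|\,\Delta(\rho)).
\]
Given a sequence of $\rho^{\otimes n}$-DIO maps $\Lambda_n$ with $\Lambda_n(\rho^{\otimes n}) \to \sigma^{\otimes \lfloor Rn \rfloor}$ in trace distance, monotonicity at the $n$-copy level gives $nD_\rho \geq D(\Lambda_n(\rho^{\otimes n})\,\|\,\Delta(\Lambda_n(\rho^{\otimes n})))$. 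The Alicki--Fannes--Winter continuity of von Neumann entropy, applied to both $\Lambda_n(\rho^{\otimes n})$ and its dephased version (and noting that the output dimension is exponential in $n$, so the AFW bound is of order $n \cdot \ve_n = o(n)$ when the error $\ve_n \to 0$), lets me replace the right-hand side by $\lfloor Rn \rfloor D_\sigma - o(n)$. Dividing by $n$ and sending $n\to\infty$ yields $R \leq D_\rho/D_\sigma$.

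The main technical obstacle is the strict $\rho^{\otimes n}$-DIO compliance of the concatenation in the achievability part: the composition $\Lambda_2\circ\Lambda_1$ requires $\Lambda_2$ to be $\omega_n$-DIO for the actual distillation output $\omega_n$, but what we obtained is only $\Psi_{2^{k_n}}$-DIO. The excerpt dispatches this by invoking the Horodecki-style reversibility construction; clean self-contained workarounds include dephasing the intermediate state or restricting distillation to an exact sub-protocol at the same asymptotic rate, both of which make the composition formally $\rho^{\otimes n}$-DIO without altering the achievable rate. Everything else in the argument is standard manipulation of relative entropy and fidelity.
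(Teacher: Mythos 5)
Your proof is correct and follows essentially the same route the paper intends: the paper itself only sketches this corollary as ``a straightforward argument in analogy with Horodecki et al.,'' i.e.\ achievability by distilling $\rho^{\otimes n}$ into maximally coherent states at rate $D(\rho\|\Delta(\rho))$ and then diluting into $\sigma$ at cost $D(\sigma\|\Delta(\sigma))$ using the composition observation, and a converse from monotonicity of $D(\cdot\|\Delta(\cdot))$ under $\rho$-DIO together with asymptotic continuity. Your explicit handling of the fact that the dilution map is only $\Psi_{2^{k_n}}$-DIO rather than $\omega_n$-DIO for the actual (approximate) distillation output is a genuine subtlety the paper glosses over, and your proposed fixes are sound.
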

\noindent As this is true also for DIO~\cite{chitambar_2018}, asymptotically, $\rho$-DIO provide no advantage whatsoever over DIO in any state transformation.

Furthermore, one can obtain various useful sufficient conditions for the transformations in the one-shot setting. For instance, we show that the monotone $R_\Delta$ can also be used to characterize state transformations under $\rho$-DIO which go beyond coherence distillation and dilution.
\begin{proposition}\label{prop:suff_Rdelta}
If $R_\Delta(\sigma) + 1 \leq 1/ \Tr \Pi_\rho \Delta(\rho)$, then there exists a $\rho$-DIO map such that $\Lambda(\rho) = \sigma$.
\end{proposition}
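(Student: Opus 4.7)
The plan is to construct the desired $\Lambda$ explicitly as a measure-and-prepare channel. The condition that $\Lambda$ be $\rho$-DIO, combined with $\Lambda(\rho) = \sigma$, is equivalent to the two linear constraints $\Lambda(\rho) = \sigma$ and $\Lambda(\Delta(\rho)) = \Delta(\sigma)$ (since $\Delta\circ\Lambda(\rho) = \Delta(\sigma)$). Because $\rho$ is supported on $\Pi_\rho$ whereas $\Delta(\rho)$ in general is not, the two-outcome test $\{\Pi_\rho,\,\id-\Pi_\rho\}$ is a natural device to tell them apart: $\rho$ always passes, while $\Delta(\rho)$ passes only with probability $p \coloneqq \Tr[\Pi_\rho\Delta(\rho)]$. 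Preparing one state on success and a corrective state on failure therefore lets one match both target outputs simultaneously.

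Concretely, assuming first $p < 1$, I would set $\tau_1 \coloneqq (\Delta(\sigma) - p\sigma)/(1-p)$ and define
\[
 \Lambda(\xi) \coloneqq \Tr[\Pi_\rho\xi]\,\sigma \,+\, \Tr[(\id-\Pi_\rho)\xi]\,\tau_1.
\]
As a measure-and-prepare map, $\Lambda$ is automatically CP and trace-preserving as soon as $\tau_1$ is a valid density operator. Unit trace is immediate from $\Tr\Delta(\sigma) = \Tr\sigma = 1$, while positivity $\tau_1\geq 0$ amounts to $p\sigma \leq \Delta(\sigma)$, i.e.\ $\sigma \leq (1/p)\Delta(\sigma)$, which by the very definition of $R_\Delta$ is exactly the hypothesis $R_\Delta(\sigma)+1 \leq 1/\Tr\Pi_\rho\Delta(\rho)$.

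The two defining identities then follow by direct calculation: $\Lambda(\rho) = \Tr[\Pi_\rho\rho]\,\sigma = \sigma$ since $\rho$ lies entirely in its own support, and $\Lambda(\Delta(\rho)) = p\sigma + (1-p)\tau_1 = \Delta(\sigma) = \Delta(\Lambda(\rho))$, confirming that $\Lambda\in\rhodio$. The edge case $p=1$ forces $R_\Delta(\sigma) = 0$, so $\sigma=\Delta(\sigma)$ is incoherent; the same recipe then works with any choice of $\tau_1$, since the second branch is never triggered on either $\rho$ or $\Delta(\rho)$. I do not anticipate any substantive obstacle here: the real content of the proposition is that the positivity of the single corrective state in this construction coincides exactly with the stated inequality, and the only step that requires any ingenuity is picking the right measurement $\{\Pi_\rho, \id-\Pi_\rho\}$—once that choice is made, everything else is a line of algebra.
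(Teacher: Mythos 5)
Your construction is exactly the paper's: the same two-outcome measure-and-prepare channel based on $\{\Pi_\rho,\id-\Pi_\rho\}$, with the same corrective state (the paper writes it implicitly via $\omega+(\lambda-1)\sigma=\lambda\Delta(\omega)$, which is your $\tau_1$), and the same observation that its positivity is precisely the hypothesis on $R_\Delta$. The argument is correct and matches the paper's proof essentially line for line.
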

In the particularly simple case of single-qubit transformations, we furthermore establish an equivalence of $\rho$-DIO and DIO.
\begin{proposition}\label{prop:qubit_trans}
For any single-qubit states $\rho$ and $\sigma$, the transformation $\rho \to \sigma$ is possible under $\rho$-DIO if and only if it possible under DIO, which holds if and only if~\cite{chitambar_2016-1} $R_\Delta(\rho) \geq R_\Delta(\sigma)$ and $\norm{\rho}{\ell_1} \geq \norm{\sigma}{\ell_1}$.
\end{proposition}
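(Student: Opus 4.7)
The plan is to reduce the equivalence ``DIO iff $\rho$-DIO'' for qubits to the Chitambar--Gour characterisation of qubit DIO transformations~\cite{chitambar_2016-1} that is already quoted in the statement. Since every DIO channel is in particular a $\rho$-DIO channel, DIO reachability trivially implies $\rho$-DIO reachability. The work therefore lies in showing that if $\Lambda$ is a $\rho$-DIO channel with $\Lambda(\rho)=\sigma$, then both Chitambar--Gour conditions $R_\Delta(\rho) \ge R_\Delta(\sigma)$ and $\lnorm{\rho}{1} \ge \lnorm{\sigma}{1}$ hold automatically.

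The first inequality is a direct consequence of the monotonicity of $R_\Delta$ under $\rho$-DIO that was recorded in the discussion preceding Theorem~\ref{thm:rhodio_dil}: if $\rho \le (1+\lambda)\Delta(\rho)$, then applying the positive channel $\Lambda$ and using $\Lambda\circ\Delta(\rho)=\Delta\circ\Lambda(\rho)$ yields $\sigma \le (1+\lambda)\Delta(\sigma)$, so $R_\Delta(\sigma)\le R_\Delta(\rho)$. For the second inequality I would exploit the qubit-specific fact that $\lnorm{\rho}{1}$ equals the trace norm of the off-diagonal part:
\begin{equation*}
  \rho - \Delta(\rho) = \begin{pmatrix} 0 & c \\ c^* & 0 \end{pmatrix}
\end{equation*}
has eigenvalues $\pm |c|$, hence $\norm{\rho-\Delta(\rho)}{1} = 2|c| = \lnorm{\rho}{1}$ (and likewise for $\sigma$). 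The $\rho$-DIO property gives $\Lambda\bigl(\Delta(\rho)\bigr) = \Delta\bigl(\Lambda(\rho)\bigr) = \Delta(\sigma)$, so by linearity $\Lambda(\rho - \Delta(\rho)) = \sigma - \Delta(\sigma)$, and contractivity of the trace norm under CPTP maps on Hermitian operators yields $\lnorm{\sigma}{1} \le \lnorm{\rho}{1}$.

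With both Chitambar--Gour conditions in hand, the cited theorem supplies a DIO channel achieving $\rho \to \sigma$, which closes the chain of implications. I do not anticipate a serious technical obstacle; the only conceptual caveat worth flagging is that the identification $\lnorm{\rho}{1} = \norm{\rho-\Delta(\rho)}{1}$ is genuinely a qubit phenomenon (breaking down already in dimension three, where diagonal blocks no longer have the same structure), so this clean equivalence between $\rho$-DIO and DIO is not expected to extend to higher-dimensional systems.
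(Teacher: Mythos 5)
Your proposal is correct and follows essentially the same route as the paper: reduce to the Chitambar--Gour qubit criterion and verify that both $R_\Delta$ and the $\ell_1$ norm are monotone under $\rho$-DIO, the latter via contractivity of the trace norm applied to $\rho-\Delta(\rho)$. The only slip is the identification $\norm{\rho-\Delta(\rho)}{1}=\lnorm{\rho}{1}$, which should read $\norm{\rho-\Delta(\rho)}{1}=\lnorm{\rho}{1}-1$ since the $\ell_1$ norm also counts the diagonal entries (summing to the unit trace); as the discrepancy is a state-independent constant, the monotonicity conclusion is unaffected.
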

\noindent Here, $\norm{\rho}{\ell_1} = \sum_{i,j} \cbraket{i|\rho|j}$. We note from~\cite{chitambar_2016-1} that single-qubit DIO transformations have been shown to be equivalent to both MIO and SIO transformations, and our result thus extends this equivalence also to $\rho$-DIO. This does not hold beyond dimension 2, as we have demonstrated in Sec.~\ref{sec:distillation} a transformation from a qutrit to a qubit system achievable with $\rho$-DIO but impossible under MIO and DIO.

Necessary conditions for single-shot $\rho$-DIO transformations can be characterized by monotones under this class, i.e., functions which obey the property that if there exists a transformation $\rho \to \sigma$ under $\rho$-DIO, then $f(\rho) \geq f(\sigma)$. Some DIO monotones discussed e.g. in~\cite{chitambar_2016-1} will in fact also be $\rho$-DIO monotones --- this includes $R_\Delta$ or the relative entropy $D(\rho\|\Delta(\rho))$. Indeed, any divergence which satisfies the data-processing inequality will form a $\rho$-DIO monotone. Importantly, this includes R\'enyi relative entropies $D_\alpha(\rho\|\Delta(\rho)) = \frac{1}{\alpha-1} \log \Tr \rho^\alpha \Delta(\rho)^{1-\alpha}$, but only in the range $\alpha \in [0,2]$~\cite{petz_1986}, which contrasts with the set DIO for which all $\alpha$ give a valid monotone~\cite{chitambar_2016-1}. For a pure state, the R\'enyi relative entropies reduce to $D_\alpha(\psi\|\Delta(\psi)) = S_{2-\alpha} (\psi)$ \cite{chitambar_2016-1}  where $S_{\gamma}(\psi) = \frac{\gamma}{1-\gamma} \log \norm{\Delta(\psi)}{\ell_\gamma}$ are the R\'enyi entropies. This shows in particular that $\ell_p$ norms of the squared moduli of the coefficients of a pure state are $\psi$-DIO monotones for $p$ in the range $p \in [0,1]$ and reverse monotones for $p \in [1,2]$. An outstanding question is whether such monotones form a complete set, in the sense that the inequality $f(\rho) \geq f(\sigma)$ for each monotone $f$ implies that there exists a $\rho$-DIO transformation taking $\rho$ to $\sigma$. Although a complete set of infinitely many monotones can be defined~\cite{buscemi_2015,buscemi_2016,takagi_2019}, it is in unclear if there exists a finite set of conditions fully characterizing transformations under $\rho$-DIO.

\section{Discussion}
In this work, we tackled the fundamental question of how to efficiently manipulate the resource of quantum coherence under operations which do not use coherence and thus, to a classical observer, appear classical. We studied this question under two classes of channels: DIO, respecting dephasing covariance for any input state, and $\rho$-DIO, tailored to a specific input state. We first shed light on the operational power of DIO and explicitly characterized pure-state transformations under this class, revealing a novel relation between DIO and majorization theory and thereby connecting DIO to several other classes of free operations.
To push the characterization of coherence manipulation under coherence non-detecting operations to its very limit, we introduced the class of operations $\rho$-DIO and investigated the advantages that this extension provides. We showed in particular that, even though the coherence of the input state is not detected, $\rho$-DIO allow one to distill more coherence than DIO in the one-shot setting. Despite $\rho$-DIO constituting a significant relaxation of the constraints of DIO, the increased capabilities of such channels are limited to non-asymptotic regimes --- we showed the advantages to disappear completely at the asymptotic level, where both sets of operations achieve the same performance in all transformations. This suggests that the simpler class $\rho$-DIO closely approximates the performance of all DIO and no significant operational advantage can be obtained by tailoring the coherence non-detecting restriction to a specific input.
The results provide insight into the structure of the ultimate physical constraints on coherence manipulation with free operations and establish new connections in the operational description of quantum coherence.


\begin{acknowledgments}

We acknowledge discussions with Ludovico Lami. This work was supported by the National Research Foundation of Singapore Fellowship No. NRF-NRFF2016-02, the National Research Foundation and L'Agence Nationale de la Recherche joint Project No. NRF2017-NRFANR004 VanQuTe, the program for FRIAS-Nagoya IAR Joint Project Group, and the Japan Society for the Promotion of Science (JSPS) KAKENHI Grant No.19H04066.

\emph{Note}. --- During the completion of this work, Wang and Wilde~\cite{wang_2019} as well as Buscemi et al.~\cite{buscemi_2019} studied the distinguishability of pairs of states in an operational setting and independently obtained results which overlap with parts of this manuscript. Note, however, that although our characterization of $\rho$-DIO relies precisely on the distinguishability between $\rho$ and $\Delta(\rho)$, the tasks of coherence distillation and dilution studied herein are different from the operational framework of~\cite{wang_2019} concerned with manipulating distinguishability.
\end{acknowledgments}

\bibliographystyle{apsrev4-1}
\bibliography{main}

\appendix

\section{Pure-state transformations under DIO}

Let $\left\{\ket x\right\}_{x=1\dots d_\mathrm{in}}\subset\cH_\mathrm{in}$ and $\left\{\ket y\right\}_{y=1\dots d_\mathrm{out}}\subset\cH_\mathrm{out}$ be the incoherent bases on the input and output Hilbert spaces respectively.
\begin{definition}[DIO]
A channel $\cE:\cL\left(\cH_\mathrm{in}\right)\to\cL\left(\cH_\mathrm{out}\right)$ is a \emph{dephasing\hyp covariant incoherent operation (DIO)} if $\Delta\circ\cE=\cE\circ\Delta$, where $\Delta(\cdot)$ is the dephasing channel with respect to the incoherent basis on the corresponding system.
\end{definition}
We now cast the DIO property of a channel in terms of its Kraus operator representations. Let $\cE(\cdot)=\sum_{i=1}^nK_i(\cdot)K_i^\dagger$; we do not have to worry about the value of $n$. The equivalence of $\Delta\circ\cE$ and $\cE\circ\Delta$ can be translated to the equality of their Choi operators:
\begin{align}
&\mathrm{id}\otimes\left[\Delta\circ\cE\right]\left(\sum_{x_1,x_2}\ket{x_1x_1}\bra{x_2x_2}\right) \nonumber\\
= \,&\mathrm{id}\otimes\left[\cE\circ\Delta\right]\left(\sum_{x_1,x_2}\ket{x_1x_1}\bra{x_2x_2}\right)\\
\Rightarrow \quad & \sum_{i,x_1,x_2,y}\ket{x_1}\bra{x_2}\otimes\left(\proj yK_i\ket{x_1}\bra{x_2}K_i^\dagger\proj y\right)\nonumber\\
=\,& \sum_{i,x}\proj x\otimes\left(K_i\proj xK_i^\dagger\right)\nonumber.
\end{align}
This leads to the following handy properties of any Kraus operator representation of a DIO:
\begin{obsm}
Define the vectors $\vect K(y,x)\in\cV\equiv\bbC^n$, $d_\mathrm{in}d_\mathrm{out}$ in number, as follows:
\be
\vect K(y,x):=\left(\bra yK_1\ket x,\bra yK_2\ket x\dots,\bra yK_n\ket x\right).
\ee
Also define $S_{y|x}:=\left\langle\vect K(y,x),\vect K(y,x)\right\rangle$, where $\langle\cdot,\cdot\rangle$ is the standard Hermitian inner product on $\bbC^n$. Then, the following conditions together capture the DIO property of the CP map $\cE(\cdot)=\sum_{i=1}^nK_i(\cdot)K_i^\dagger$:
\begin{enumerate}
\item Diagonal (i.e., incoherent) input produces diagonal output on average: $\left\langle\vect K(y,x),\vect K(y_1,x)\right\rangle=S_{y|x}\delta_{yy_1}$.\label{dyy}
\item Diagonal\hyp free input produces diagonal\hyp free output on average: $\left\langle\vect K(y,x),\vect K(y,x_1)\right\rangle=S_{y|x}\delta_{xx_1}$.\label{dxx}
\item Trace is preserved: $\sum_{y=1}^{d_\mathrm{out}}S_{y|x}=1$ for all $x\in\left\{1,2\dots,d_\mathrm{in}\right\}$; in other words, the matrix $S$ with components $S_{yx}:=S_{y|x}$ is column\hyp stochastic (justifying the ``conditional'' notation).\label{TP}
\end{enumerate}
It is important to bear in mind that each of these conditions involves summing over all the $n$ Kraus operators. In particular, if the ``on average'' condition \ref{dyy} were tightened to apply to each Kraus operator separately while removing \ref{dxx} altogether, the resulting conditions would characterize the class of \emph{incoherent operations (IO)}. Likewise, if both \ref{dyy} and \ref{dxx} were tightened to apply to each Kraus operator, we would have \emph{strictly incoherent operations (SIO)}.
\end{obsm}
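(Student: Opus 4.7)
The plan is to reduce everything to a single matrix-element identity and then translate the channel-level hypothesis of dephasing covariance together with trace preservation into inner-product identities on the vectors $\vect K(y,x)$. The bridge is the formula
\begin{equation}
\bra{y_2}\,\cE(\ket x\bra{x_1})\,\ket{y_3} \;=\; \sum_i \bra{y_2}K_i\ket{x}\,\overline{\bra{y_3}K_i\ket{x_1}} \;=\; \left\langle\vect K(y_3,x_1),\,\vect K(y_2,x)\right\rangle,
\end{equation}
using the Dirac-style convention $\langle u,v\rangle=\sum_i\overline{u_i}v_i$. With this identity in hand, each of the three listed conditions corresponds transparently to one case of the hypothesis, so the proof becomes a short case analysis.

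For the forward direction, assuming $\cE$ is CPTP and $\Delta\cE=\cE\Delta$, I would apply the commutation to the basis operators $\ket x\bra{x_1}$. When $x=x_1$ the right-hand side equals $\cE(\proj x)$, forcing $\cE(\proj x)$ to be diagonal in the output basis; its $(y_1,y)$-entry reads $\langle\vect K(y,x),\vect K(y_1,x)\rangle$, and its vanishing off the diagonal is the first listed condition. When $x\neq x_1$ the right-hand side is zero, so each diagonal entry $\langle\vect K(y,x_1),\vect K(y,x)\rangle$ of $\Delta(\cE(\ket x\bra{x_1}))$ must vanish; combined with conjugate symmetry of the inner product and the tautology at $x=x_1$, this is the second listed condition. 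Finally, trace preservation $\sum_i K_i^\dagger K_i=\id$ reads off the diagonal as $\sum_y S_{y|x}=1$, which is the third condition.

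For the converse, expanding $\bra{x_1}\sum_i K_i^\dagger K_i\ket x = \sum_y\langle\vect K(y,x_1),\vect K(y,x)\rangle$ and invoking the second condition for $x\neq x_1$ together with the third for $x=x_1$ recovers $\sum_i K_i^\dagger K_i=\id$, so $\cE$ is trace preserving. Verifying $\Delta\cE=\cE\Delta$ then reduces by linearity to the basis operators: at $x=x_1$ the first condition makes $\cE(\proj x)$ diagonal, so $\Delta\cE(\proj x)=\cE(\proj x)=\cE(\Delta\proj x)$; at $x\neq x_1$ the second condition kills every diagonal entry of $\cE(\ket x\bra{x_1})$, giving $\Delta\cE(\ket x\bra{x_1})=0=\cE(\Delta\ket x\bra{x_1})$. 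The argument is essentially bookkeeping; the only real pitfall is keeping the complex-conjugation convention aligned so that the arguments $(y,x)$ versus $(y',x')$ in each inner product correspond to the row/column indices they are intended to represent, after which the three conditions fall out without any further work.
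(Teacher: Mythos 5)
Your proposal is correct and follows essentially the same route as the paper: the paper equates the Choi operators of $\Delta\circ\cE$ and $\cE\circ\Delta$, which is exactly your entrywise comparison of the two channels' actions on the matrix units $\ket{x}\bra{x_1}$, and conditions \ref{dyy}--\ref{TP} drop out of the diagonal ($x=x_1$) and off-diagonal ($x\neq x_1$) blocks plus trace preservation in the same way. Your explicit converse (in particular, noting that conditions \ref{dxx} and \ref{TP} together recover $\sum_i K_i^\dagger K_i=\id$, since \ref{TP} alone only fixes the diagonal) is a small but welcome amount of extra care that the paper leaves implicit.
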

For convenience, we also define the normalized vectors $\hat{\vecg\kappa}(y,x):=\fr1{\sqrt{S_{y|x}}}\vect K(y,x)$, whence $\left\langle\hat{\vecg\kappa}(y,x),\hat{\vecg\kappa}(y_1,x)\right\rangle=\delta_{yy_1}$ and $\left\langle\hat{\vecg\kappa}(y,x),\hat{\vecg\kappa}(y,x_1)\right\rangle=\delta_{xx_1}$. In cases where $S_{y|x}=0$, just define $\hat{\vecg\kappa}(y,x)$ to be some unit vector orthogonal to the rest, suitably expanding the space.

\subsection{Deterministic pure\hyp to\hyp pure state conversion}
Now consider the problem of determining the conditions under which there is a DIO deterministically mapping a given pure state $\psi$ to another, $\phi$. We shall prove the following theorem. Say the DIO given by $\vect K(y,x)$ achieves the desired transformation. Then,
\begingroup
\renewcommand\thetheorem{\ref{thm:dio_pure}}
\begin{theorem}\label{thdDIO}
A given initial state $\ket\psi=\sum_x\mu_x\ket x$ can be mapped deterministically to a given target state $\ket\phi=\nu_y\ket y$ by a DIO if and only if the majorization relation
\be\label{eq:majd}
\vect p\prec\vect q
\ee
holds, where $p_x:=\left|\mu_x\right|^2$ and $q_y:=\left|\nu_y\right|^2$.
\end{theorem}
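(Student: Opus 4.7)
The plan is to use the Kraus-level characterisation of DIO from the preceding Observation to prove both directions. The ``only if'' direction is the substantive one; the ``if'' direction follows almost for free from the containment $\text{SIO}\subseteq\text{DIO}$ and the existing SIO majorization result cited in the main text.

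For necessity, I would start from a DIO $\cE=\sum_i K_i(\cdot)K_i^\dagger$ with $\cE(\ketbra\psi\psi)=\ketbra\phi\phi$. Purity of the output forces $K_i\ket\psi=c_i\ket\phi$ for every $i$, where $\vect c:=(c_i)_i\in\bbC^n$ is a unit vector; in components this reads $\sum_x\mu_x\,\vect K(y,x)=\nu_y\,\vect c$ for each $y$. Taking the inner product with $\hat{\vecg\kappa}(y,x_0)$ and using condition~\ref{dxx} to isolate the $x=x_0$ term yields $\mu_{x_0}\sqrt{S_{y|x_0}}=\nu_y\langle\hat{\vecg\kappa}(y,x_0),\vect c\rangle$. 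Squaring and summing over $y$, together with trace preservation $\sum_y S_{y|x}=1$, produces the linear relation $\vect p=T\vect q$ with $T_{xy}:=|\langle\hat{\vecg\kappa}(y,x),\vect c\rangle|^2\geq 0$.

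The two orthogonality conditions of the Observation now bound $T$: condition~\ref{dyy} combined with $\|\vect c\|=1$ gives row sums $\sum_y T_{xy}\leq 1$, while condition~\ref{dxx} gives column sums $\sum_x T_{xy}\leq 1$. The normalisation $\sum_x p_x=\sum_y q_y=1$ upgrades the column inequality to equality on the support of $\vect q$. A short combinatorial extension to a doubly stochastic matrix, obtained by appending zero-$q$ columns and distributing the row-sum slack into them, then lets Hardy--Littlewood--Polya conclude $\vect p\prec\vect q$.

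For sufficiency, given $\vect p\prec\vect q$ the cited SIO pure-state result already supplies a channel effecting $\psi\to\phi$; since each SIO Kraus operator individually satisfies both conditions~\ref{dyy} and~\ref{dxx}, the corresponding channel commutes with $\Delta$ and is therefore DIO. The main obstacle is the necessity direction, for two reasons: the vectors $\hat{\vecg\kappa}(y,x)$ are orthonormal only within fixed-$x$ or fixed-$y$ slices and not jointly across all pairs, so one must deploy each orthogonality relation strictly in its own slot; and the extension of the doubly sub-stochastic $T$ to a genuinely doubly stochastic matrix, while elementary, requires a small bookkeeping argument before Hardy--Littlewood--Polya can be invoked.
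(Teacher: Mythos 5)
Your proposal is correct and follows essentially the same route as the paper's proof: purity of the output forces $K_i\ket\psi=c_i\ket\phi$, the slice-wise orthogonality relations of the Observation turn this into $\vect p=T\vect q$ with $T_{xy}=\left|\left\langle\hat{\vecg\kappa}(y,x),\vect c\right\rangle\right|^2$ doubly sub-stochastic by Bessel's inequality in each slot, normalization upgrades weak majorization to majorization, and sufficiency is inherited from the SIO result. The only cosmetic difference is that you make the extension of the sub-bistochastic $T$ to a bistochastic matrix explicit, whereas the paper simply invokes the standard fact that weak majorization between normalized probability vectors implies majorization.
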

\endgroup
\begin{remark}[Note on majorization]For a vector $\vect v$ on a finite\hyp dimensional real vector space, define $\vect v^{\downarrow}$ as the vector whose components are the components of $\vect v$ arranged in non\hyp increasing order. For example, $(2,-5,2,4)^{\downarrow}=(4,2,2,-5)$. For a pair of vectors $(\vect u,\vect v)$, the majorization relation $\vect u\prec\vect v$ (``$\vect v$ majorizes $\vect u$'') is then defined as the conjunction of the following conditions:
\begin{align}
    u^\downarrow_1&\le v^\downarrow_1;\nonumber\\
    u^\downarrow_1+u^\downarrow_2&\le v^\downarrow_1+v^\downarrow_2;\nonumber\\
    &\vdots\nonumber\\
    u^\downarrow_1+u^\downarrow_2\dots+u^\downarrow_d&\le v^\downarrow_1+v^\downarrow_2\dots+v^\downarrow_d,
\end{align}
where $d$ is the larger of the dimensionalities of $u$ and $v$ (we append the shorter vector with a suitable number of trailing zeros).

A real matrix with nonnegative entries is column\hyp stochastic (row\hyp stochastic) if each of its columns (rows) adds up to 1; a matrix that is both row\hyp and column\hyp stochastic is said to be bistochastic. A nonnegative real matrix is sub\hyp *stochastic if another nonnegative matrix can be added to it to make it *stochastic (``*'' can stand for ``column\hyp'', ``row\hyp'', ``bi\hyp'', or the absence of any of these qualifiers).

For the purposes of proving the above theorem, we will use the following properties of majorization (see e.g.~\cite{bhatia_1996}):
\begin{enumerate}
    \item For normalized probability vectors $\vect u$ and $\vect v$, the relation $\vect u\prec\vect v$ holds if and only if there exists a bistochastic matrix $T$ such that $\vect u=T\vect v$.
    \item For nonnegative vectors $\vect u$ and $\vect v$, if there exists a sub\hyp bistochastic matrix $T$ such that $\vect u=T\vect v$, then $\vect v$ is said to weakly majorize $\vect u$; weak majorization between normalized probability vectors implies non\hyp weak majorization (i.e.\ the previous condition).
\end{enumerate}
\end{remark}
\begin{proof}
Since the overall output is the pure state $\phi$, the output of each individual Kraus operator must necessarily be proportional to $\phi$. In other words, there exists a normalized $\hat{\vect c}\in\bbC^n$ such that
\be
\sum_{x_1=1}^{d_\mathrm{in}}K_i(y,x_1)\mu_{x_1}=c_i\nu_y
\ee
$\forall$ $i,y$. Multiplying both sides by $\overline{K_i(y,x)}$ and summing over $i$,
\begin{align}
\sum_{x_1=1}^{d_\mathrm{in}}\mu_{x_1}\sum_{i=1}^n\overline{K_i(y,x)}K_i(y,x_1)&=\sum_{i=1}^n\overline{K_i(y,x)}c_i\nu_y\nonumber\\
\Rightarrow\sum_{x_1=1}^{d_\mathrm{in}}\mu_{x_1}S_{y|x}\delta_{xx_1}&=\left\langle\vect K(y,x),\hat{\vect c}\right\rangle\nu_y\nonumber\\
\Rightarrow\mu_x\sqrt{S_{y|x}}&=\left\langle\hat{\vecg\kappa}(y,x),\hat{\vect c}\right\rangle\nu_y\\
\Rightarrow p_x\equiv\left|\mu_x\right|^2=\left|\mu_x\right|^2\sum_{y=1}^{d_\mathrm{out}}S_{y|x}&=\sum_{y=1}^{d_\mathrm{out}}\left|\left\langle\hat{\vecg\kappa}(y,x),\hat{\vect c}\right\rangle\right|^2\left|\nu_y\right|^2\nonumber\\&\equiv\sum_{y=1}^{d_\mathrm{out}}\left|\left\langle\hat{\vecg\kappa}(y,x),\hat{\vect c}\right\rangle\right|^2q_y.\nonumber
\end{align}
The second line above follows from condition \ref{dxx}; the following line by dividing throughout by $\sqrt{S_{y|x}}$ and applying the definition of $\hat{\vecg\kappa}(y,x)$; in the last line we just sum the previous line's expressions over $y$ and use the stochasticity of $S$.

Now define $T_{x|y}:=\left|\left\langle\hat{\vecg\kappa}(y,x),\hat{\vect c}\right\rangle\right|^2$. The normalization of $\hat{\vect c}$ and orthonormality of $\left\{\hat{\vecg\kappa}(y,x)\right\}_{x=1}^{d_\mathrm{in}}$ (for each $y$) and $\left\{\hat{\vecg\kappa}(y,x)\right\}_{y=1}^{d_\mathrm{out}}$ (for each $x$) together imply that $T$ is sub\hyp bistochastic. This implies that $\vect p$ is weakly majorized by $\vect q$; normalization of the distributions implies (non\hyp weak) majorization. Incidentally, the same normalization arguments also imply that $n=d_\mathrm{in}$ suffices.

Thus, majorization of the input coherence distribution by the output is necessary for the existence of a DIO deterministically mapping $\psi\mapsto\phi$. Since this condition is already known to be sufficient for the existence of such an SIO, its sufficiency for the existence of such a DIO follows.
\end{proof}

\subsection{Probabilistic pure\hyp to\hyp pure state conversion}
A possible definition of probabilistic conversion of a given state $\ket\psi$ to an ensemble $\left\{\left(\eta_j,\ket{\phi_j}\right)\right\}$ under a class of operations is one where a channel $\Lambda\equiv\left\{K_j\right\}$ belonging to the class satisfies $K_j\psi K_j^\dagger=\eta_j\phi_j$ (possibly with some of the $\phi_j$ mutual duplicates). This definition is not easily amenable to the treatment of the previous section, owing to the lack of individual Kraus operator\n based constraints in DIO. This is in contrast with IO and SIO, whose definitions constrain each Kraus operator. Extension of our results to the pure\hyp state\hyp to\hyp mixed\hyp state case is hindered by this obstacle.

Nevertheless, we can say something useful about \emph{heralded} probabilistic conversion from a pure state $\ket\psi$ to an ensemble $\left\{\left(\eta_j,\ket{\phi_j}\right)\right\}$. This entails that a conversion to $\phi_j$ be heralded by a correlated ``flag'' system whose state unambiguously identifies $j$. In other words, we require a DIO to map $\psi$ to $\sum_j\eta_j\sigma_j\otimes\phi_j$, with the ``flag states'' $\sigma_j$ on the first subsystem unambiguously distinguishable. We might as well set these to some mutually\hyp orthogonal $\proj j$ without loss of generality. To keep the game fair, we shall require $\ket j$ to constitute the axiomatic incoherent basis for the flag system.
\begingroup
\renewcommand\thetheorem{\ref{prop:dio_pure_prob}}
\begin{proposition}
There exists a DIO effecting the transformation $\psi\mapsto\sum_j\eta_j\proj j\otimes\phi_j$ if and only if
\be
\vect p\prec\sum_j\eta_j\vect q_j^{\downarrow},
\ee
where $\vect p$ is as before and $q^j_y:=\left|\nu^j_y\right|^2$ for $\ket{\phi_j}=\sum_y\nu^j_y\ket y$.
\end{proposition}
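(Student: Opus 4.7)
My plan is to prove both directions by adapting the Kraus-operator analysis of Theorem~\ref{thm:dio_pure} to a flagged output.

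For necessity, let $\{K_i\}$ be Kraus operators of a DIO realizing $\psi\mapsto\sigma:=\sum_j\eta_j\proj j\otimes\phi_j$. Because $\supp\sigma=\sspan\{\ket{\tilde\phi_j}\}_j$ with the orthogonal vectors $\ket{\tilde\phi_j}:=\ket j\otimes\ket{\phi_j}$, one can write $K_i\ket\psi=\sum_j\alpha_{ij}\ket{\tilde\phi_j}$, and matching $\sum_iK_i\psi K_i^\dagger$ with $\sigma$ forces $\sum_i\alpha_{ij}\overline{\alpha_{ik}}=\eta_j\delta_{jk}$. The normalized vectors $\vect c^{(j)}:=\alpha^{(j)}/\sqrt{\eta_j}\in\bbC^n$ thus form an orthonormal family---the key departure from Theorem~\ref{thm:dio_pure}, where a single unit vector $\hat{\vect c}$ sufficed. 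Projecting $K_i\ket\psi$ onto $\bra{j,y}$ yields $\sum_xK_i(j,y;x)\mu_x=\alpha_{ij}\nu^j_y$, and the inner-product manipulation of the theorem (multiply by $\overline{K_i(j,y;x')}$, sum over $i$, invoke condition~\ref{dxx}, divide by $\sqrt{S_{(j,y)|x'}}$) produces
\begin{equation*}
  p_x=\sum_j\eta_j\sum_yT^{(j)}_{x|y}\,q^j_y,\qquad T^{(j)}_{x|y}:=\left|\<\hat{\vecg\kappa}(j,y;x),\vect c^{(j)}\>\right|^2.
\end{equation*}
Orthonormality of $\{\hat{\vecg\kappa}(j,y;x)\}_x$ (condition~\ref{dxx}), of $\{\hat{\vecg\kappa}(j,y;x)\}_{(j,y)}$ (condition~\ref{dyy}), and of $\{\vect c^{(j)}\}_j$ then forces each $T^{(j)}$ to be sub-bistochastic, while the normalization of $\vect p$ pins its column sums to $1$ on $\supp\vect q_j$. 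Hence $T^{(j)}\vect q_j\prec\vect q_j$ for every $j$, and convexity of majorization gives $\vect p=\sum_j\eta_jT^{(j)}\vect q_j\prec\sum_j\eta_j\vect q_j^{\downarrow}$.

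For sufficiency, given $\vect p\prec\sum_j\eta_j\vect q_j^{\downarrow}$, I would invoke the majorization characterization of probabilistic pure-state transformations under SIO~\cite{du_2015-1,fang_2018} to obtain SIO Kraus operators $\{K_{j,k}\}$ with $\sum_{j,k}K_{j,k}^\dagger K_{j,k}=\id$, $K_{j,k}\ket\psi=\sqrt{p_{j,k}}\ket{\phi_j}$, and $\sum_kp_{j,k}=\eta_j$. The heralded operators $\tilde K_{j,k}:=\ket j\otimes K_{j,k}$ define a channel $\Lambda(\cdot):=\sum_{j,k}\tilde K_{j,k}(\cdot)\tilde K_{j,k}^\dagger$ which manifestly satisfies $\Lambda(\psi)=\sum_j\eta_j\proj j\otimes\phi_j$. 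Because each $\tilde K_{j,k}$ still has at most one nonzero entry per column in the enlarged incoherent basis $\{\ket{j,y}\}$, every branch is individually dephasing-covariant (the direct calculation showing $\SIO\subseteq\DIO$), so $\Lambda\in\DIO$.

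The main obstacle is the necessity half: the DIO condition only constrains sums of Kraus operators, so disentangling a sub-bistochastic matrix per flag value $j$ hinges crucially on the mutual orthogonality of the flagged targets $\ket j\otimes\ket{\phi_j}$, which is what lets the single amplitude vector $\hat{\vect c}$ of Theorem~\ref{thm:dio_pure} be promoted to the orthonormal family $\{\vect c^{(j)}\}_j$---precisely the structural feature absent in the analogous pure-to-mixed conversion with non-orthogonal targets, and the reason the latter resists the same treatment.
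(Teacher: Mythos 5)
Your proof is correct and follows essentially the same route as the paper's: a per-flag-sector adaptation of the Kraus-operator/sub-bistochastic-matrix argument of Theorem~\ref{thm:dio_pure} for necessity (ending with the same convexity-of-majorization bookkeeping), and a reduction to the known SIO result with flags appended for sufficiency. The only cosmetic difference is that you extract the sector structure by deriving the orthogonality of the amplitude vectors $\vect c^{(j)}$ from the block-diagonal output, whereas the paper assumes without loss of generality that each Kraus operator is supported on a single flag sector and restricts the inner products to that sector.
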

\begin{proof}
Without loss of generality, we can decompose the requisite DIO using Kraus operators of the form
\be\label{Kflag}
K_{j,m}=\sum_{x,y}K_{j,m}(j,y;x)\ket j\otimes\ket y\bra x,
\ee
with
\be
\sum_mK_{j,m}\psi K_{j,m}^\dagger=\eta_j\proj j\otimes\phi_j.
\ee
Adapting the notation of the previous section, the DIO conditions can be cast as
\begin{enumerate}
\item $\left\langle\vect K(j,y;x),\vect K(j_1,y_1;x)\right\rangle=S_{j,y|x}\delta_{jj_1}\delta_{yy_1}$.
\item $\left\langle\vect K(j,y;x),\vect K(j,y;x_1)\right\rangle=S_{j,y|x}\delta_{xx_1}$.\label{djxx}
\item $\sum_{j,y}S_{j,y|x}=1$ for all $x$.
\end{enumerate}
In fact, considering the form \eqref{Kflag} allows us to strengthen condition \ref{djxx} above to
\be\label{djmxx}
\left\langle\vect K(j,y;x),\vect K(j,y;x_1)\right\rangle_j=S_{j,y|x}\delta_{xx_1},
\ee
where $\left\langle\vect u,\vect v\right\rangle_j:=\sum_m\overline{u_{j,m}}v_{j,m}$ for $\vect u$, $\vect v$ in the abstract vector space $\cV$ defined above. The rest of our proof to Theorem~\ref{thdDIO} can be applied as such, but with all such vectors restricted to the subspace corresponding to a specific $j$. This leads to
\be
p_x\sum_yS_{j,y|x}=\sum_y\left|\left\langle\hat{\vecg\kappa}(j,y;x),\vect c^j\right\rangle\right|^2q^j_y,
\ee
where
\be
K_{j,m}\ket\psi=c^j_m\ket j\otimes\ket{\phi_j}
\ee
with $\eta_j=\sum_m\left|c^j_m\right|^2$. Thus,
\be\label{eq:majc}
\vect p_j\equiv\left(p_x\sum_yS_{j,y|x}\right)_x\precw\eta_j\vect q^j.
\ee
From the properties of the majorization relation, we have
\be
\vect p_j^{\downarrow}\precw\eta_j\vect q_j^{\downarrow}.
\ee
Summing over $j$ yields
\be
\sum_j\vect p_j^{\downarrow}\precw\sum_j\eta_j\vect q_j^{\downarrow},
\ee
which strengthens under normalization considerations to non\hyp weak majorization. But $\vect p=\sum_j\vect p_j\prec\sum_j\vect p_j^{\downarrow}$. Therefore,
\be
\vect p\prec\sum_j\eta_j\vect q_j^{\downarrow}.
\ee
Again, the converse follows from the corresponding result about SIO.
\end{proof}

\section{$\rho$-DIO transformations}

\subsection{Distillation}

Consider the rate of distillation of coherence under the operations $\rho$-DIO, i.e. channels $\Lambda$ such that $\Delta \circ \Lambda(\rho) = \Lambda \circ \Delta(\rho)$ for some fixed $\rho$. We will use $\< A, B \> = \Tr (A^\dagger B)$ for the Hilbert-Schmidt inner product.

\begingroup
\renewcommand\thetheorem{\ref{thm:rhodio_dis}}
\begin{theorem}
The one-shot distillable coherence under $\rho$-DIO for any input state $\rho$ is given by
\begin{equation*}\begin{aligned}
  C_{d,\rhodio}^{(1),\ve}(\rho) = \logfloor{ D^\ve_H (\rho \| \Delta(\rho)) },
\end{aligned}\end{equation*}
where $\logfloor{x} \coloneqq \log \lfloor 2^x \rfloor$.
\end{theorem}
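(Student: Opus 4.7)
I would prove the theorem by matching upper and lower bounds on $m$ in the definition of $C^{(1),\ve}_{d,\rhodio}$. The guiding observation is that the defining condition $\Lambda\circ\Delta(\rho)=\Delta\circ\Lambda(\rho)$, combined with the desideratum that $\Lambda(\rho)$ be close to $\Psi_m$, essentially forces $\Lambda(\Delta(\rho))$ to equal $\id/m$. This is what creates the bridge to the hypothesis testing entropy $D_H^\ve(\rho\|\Delta(\rho))$, since $M=\Lambda^\dagger(\proj{\Psi_m})$ will turn out to be a feasible test operator with $\Tr M\,\Delta(\rho)=1/m$.

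\textbf{Converse.} Assume $\Lambda\in\rhodio$ satisfies $F(\Lambda(\rho),\Psi_m)\geq 1-\ve$. Define $M:=\Lambda^\dagger(\proj{\Psi_m})$; then $0\leq M\leq\id$ since $\Lambda^\dagger$ is CP and unital. Using $F(\Lambda(\rho),\Psi_m)=\langle\Psi_m|\Lambda(\rho)|\Psi_m\rangle$ gives $\Tr M\rho\geq 1-\ve$, so $M$ is admissible in the $D_H^\ve$ optimization. For the other expectation,
\begin{equation*}
\Tr M\,\Delta(\rho)=\Tr\proj{\Psi_m}\Lambda(\Delta(\rho))=\Tr\proj{\Psi_m}\Delta(\Lambda(\rho))=\tfrac{1}{m},
\end{equation*}
where the first step uses the adjoint, the second the $\rho$-DIO property, and the third the identity $\Delta(\proj{\Psi_m})=\id/m$ (valid against any state). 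Hence $D_H^\ve(\rho\|\Delta(\rho))\geq\log m$, and since $m$ is an integer we can sharpen to $\log m\leq\logfloor{D_H^\ve(\rho\|\Delta(\rho))}$.

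\textbf{Achievability.} Conversely, let $m$ be any integer with $m\leq 2^{D_H^\ve(\rho\|\Delta(\rho))}$ and let $M$ attain the minimum, so $t:=\Tr M\,\Delta(\rho)\leq 1/m$ and $\Tr M\rho\geq 1-\ve$. Define the ``measure-and-prepare'' channel
\begin{equation*}
\Lambda(\sigma):=\Tr(M\sigma)\proj{\Psi_m}+\Tr((\id-M)\sigma)\,\tau,\qquad \tau:=\frac{\id/m-t\proj{\Psi_m}}{1-t}.
\end{equation*}
Because $t\leq 1/m$, $\tau$ is a bona fide state, so $\Lambda$ is CPTP. A direct check shows $\Lambda(\Delta(\rho))=t\proj{\Psi_m}+(1-t)\tau=\id/m$, while $\Delta(\tau)=\id/m$ (since $\tau$ was built to have uniform diagonal), so $\Delta(\Lambda(\rho))=\Tr(M\rho)\id/m+\Tr((\id-M)\rho)\,\id/m=\id/m$ as well, verifying that $\Lambda\in\rhodio$. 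Finally, $F(\Lambda(\rho),\Psi_m)\geq\Tr(M\rho)\geq 1-\ve$.

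\textbf{Main obstacle.} The delicate point is engineering $\Lambda$ to be genuinely $\rho$-DIO and not merely IO or MIO: one must simultaneously produce a good fidelity with $\Psi_m$ on the input $\rho$ \emph{and} exactly match $\Delta(\Lambda(\rho))=\Lambda(\Delta(\rho))$. The ansatz above resolves this by choosing the ``failure state'' $\tau$ so that both $\Lambda(\Delta(\rho))$ and $\Delta(\Lambda(\rho))$ equal the maximally mixed state, which is only possible because $t\leq 1/m$ — and that is precisely the hypothesis testing constraint. The integer floor in $\logfloor{\cdot}$ enters because $m$ is forced to be a positive integer in the operational definition, which on the converse side sharpens $\log m\leq D_H^\ve$ and on the achievability side lets us take $m=\lfloor 2^{D_H^\ve(\rho\|\Delta(\rho))}\rfloor$.
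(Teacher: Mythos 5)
Your proof is correct, and it reaches the paper's formula by a genuinely different route. The paper first symmetrizes: it applies a permutation twirl $\T$ to argue that the optimal channel may be taken of the form $\Lambda(Q) = \< X, Q\>\Psi_m + \<\id - X, Q\>\tfrac{\id-\Psi_m}{m-1}$, reads off $0\le X\le\id$ from complete positivity and $\<X,\Delta(\rho)\>=\tfrac1m$ from the $\rho$-DIO condition, and thereby identifies the achievable fidelity with the SDP $\max\{\<X,\rho\> : 0\le X\le\id,\ \<X,\Delta(\rho)\>=\tfrac1m\}$, which it then rewrites as the hypothesis-testing program. You avoid the twirl entirely: your converse extracts the test operator $M=\Lambda^\dagger(\proj{\Psi_m})$ from an \emph{arbitrary} $\rho$-DIO channel and uses $\Delta(\proj{\Psi_m})=\id/m$ together with self-adjointness of $\Delta$ to pin down $\Tr M\Delta(\rho)=\tfrac1m$ exactly, while your achievability is an explicit measure-and-prepare construction whose ``failure state'' $\tau=\frac{\id/m - t\proj{\Psi_m}}{1-t}$ is tuned so that both $\Lambda(\Delta(\rho))$ and $\Delta(\Lambda(\rho))$ equal $\id/m$. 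Two things your version buys: it is more elementary (no symmetry argument, no permutation-invariance bookkeeping), and it cleanly handles the mismatch between the equality constraint $\<X,\Delta(\rho)\>=\tfrac1m$ appearing in the paper's parametrization and the inequality $\Tr M\Delta(\rho)\le\tfrac1m$ native to $D_H^\ve$, since your construction works for any $t\le\tfrac1m$ (this step is implicit in the paper's final chain of equalities). The paper's twirling reduction, in exchange, gives a structural characterization of all optimal protocols that it reuses almost verbatim for the dilution theorem. The only points worth making fully explicit in a written version are that $F(\sigma,\Psi_m)=\braket{\Psi_m|\sigma|\Psi_m}$ because $\Psi_m$ is pure, and that $\id$ in your $\tau$ denotes the identity on the $m$-dimensional output space (so that positivity of $\id/m - t\proj{\Psi_m}$ indeed only requires $t\le\tfrac1m$); both are routine.
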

\endgroup
\begin{proof}
 The $\ve$-error one-shot rate of distillation can be expressed as
\begin{equation}\begin{aligned}
  C_{d,\rhodio}^{(1),\ve}(\rho) := \log \max \lset m \in \mathbb N \bar F_\rhodio(\rho,m) \geq 1- \ve \rset
\end{aligned}\end{equation}
where $F_\rhodio$ is the achievable fidelity of distillation, i.e.
\begin{equation}\begin{aligned}
  F_\rhodio(\rho, m) \coloneqq \max_{\Lambda \in \rhodio} F(\Lambda(\rho), \Psi_m)
\end{aligned}\end{equation}
with $\Psi_m$ the $m$-dimensional maximally coherent state.

Defining the twirling $\T(\cdot) = \frac{1}{d!} \sum_{i=1}^{d!} U_{\pi(i)} \cdot U_{\pi(i)}^\dagger$ where each $\pi(i)$ is a permutation of the basis vectors, we have that
\begin{equation}\begin{aligned}
  F_{\rhodio}(\rho,m) &= \max_{\Lambda \in \rhodio} F (\T \circ \Lambda(\rho), \T(\Psi_m)) \\
  &= \max_{\Lambda \in \rhodio} F (\T \circ \Lambda(\rho), \Psi_m),
\end{aligned}\end{equation}
that is, it suffices to optimise over twirled maps of the form $\Lambda = \T \circ \Lambda$. Due to permutation invariance, the output of such a map must satisfy $\braket{i | \Lambda(\cdot) | i} = \braket{\pi(i) | \Lambda(\cdot) | \pi(i)} \; \forall i $ and $\braket{i | \Lambda(\cdot) | j} = \braket{\pi(i) | \Lambda(\cdot) | \pi(j)} \; \forall i\neq j$, where $\pi$ is an arbitrary permutation; imposing these constraints, we can write any such map as
\begin{equation}\begin{aligned}
  \Lambda(Q) &= \< X, Q \> \Psi_m + \< \id - X, Q \> \frac{\id-\Psi_m}{m-1}
\end{aligned}\end{equation}
for some operator $X$. The complete positivity of $\Lambda$ is equivalent to the condition $0 \leq X \leq \id$, and to further impose that $\Lambda \in \rho$-DIO, we need that
\begin{equation}\begin{aligned}
 \frac{\id}{m}  &= \Delta \circ \Lambda(\rho)\\
  &= \Lambda \circ \Delta(\rho)\\
  &= \< X, \Delta(\rho) \> \Psi_m + (1 - \< X, \Delta(\rho)) \> \frac{\id-\Psi_m}{m-1}
\end{aligned}\end{equation}
which is satisfied if and only if $\< X, \Delta(\rho) \> = \frac{1}{m}$. Altogether, we have
\begin{equation}\begin{aligned}\label{eq:fidelity}
  F_\rhodio(\rho,m) = \max \lsetr \< X, \rho \> \barr 0 \leq X \leq \id,\; \< X, \Delta(\rho) \> = \frac{1}{m} \rsetr.
\end{aligned}\end{equation}
Although originally defined for $m \in \NN$, we extend the definition of $F_\rhodio$ to any $m \geq 1$ as in Eq.~\eqref{eq:fidelity}. The one-shot rate of distillation is then
\begin{equation}\begin{aligned}
  C_{d,\rhodio}^{(1),\ve}(\rho) \\
  = &\logfloor{ \log \max \lset m \in \mathbb R \bar F_\rhodio(\rho,m) \geq 1-\ve \rset }\\
  = &\logfloor{ - \log \min \lset \< X, \Delta(\rho) \> \bar \< X, \rho \> \geq 1-\ve,\; 0 \leq X \leq \id \rset }\\
  = &\logfloor{ D^\ve_H (\rho \| \Delta(\rho)) }
\end{aligned}\end{equation}
where $D^\ve_H$ is the hypothesis testing relative entropy.
\end{proof}

By considering the asymptotic scenario, we then have
\begin{equation}\begin{aligned}\label{eq:asymp_distillation}
  C^\infty_{d,\rhodio} (\rho) &= \lim_{\ve \to 0} \lim_{n \to \infty} \frac1n C_{d,\rho^{\otimes n}\text{\rm-DIO}}^{(1),\ve}(\rho^{\otimes n})\\
  &= D(\rho \| \Delta(\rho))
\end{aligned}\end{equation}
by quantum Stein's lemma \cite{ogawa_2000}.

Recall from \cite{regula_2017} that for DIO, we have
\begin{equation}\begin{aligned}
  C_{d,\DIO}^{(1),\ve}(\rho) = \logfloor{ \min_{\substack{X=\Delta(X)\\\Tr X = 1}} D^\ve_H (\rho \| X) }
\end{aligned}\end{equation}
and
\begin{equation}\begin{aligned}
  C^\infty_{d,\DIO} (\rho) = D(\rho \| \Delta(\rho))
\end{aligned}\end{equation}
which shows that, although $\rho$-DIO can have a larger rate of one-shot distillation than DIO, asymptotically the operations have the same capabilities in distillation.

\subsubsection{Exact distillation}

Consider now the case of zero-error distillation, that is, exact transformations $\rho \to \Psi_m$ under $\rhodio$. From the above, we have that the rate of zero-error distillation is given by
\begin{equation}\begin{aligned}
  C_{d,\rhodio}^{(1),0}(\rho) &= \logfloor{ D^0_H (\rho \| \Delta(\rho)) }\\
  &= \log \left\lfloor \frac{1}{\< \Pi_\rho, \Delta(\rho) \>} \right\rfloor
\end{aligned}\end{equation}
where $\Pi_\rho$ is the projection onto the support of $\rho$. 

This simplifies in particular for the case of a pure state $\ket\psi = \sum_i \psi_i \ket{i}$:
\begin{equation}\begin{aligned}
  C_{d,\psidio}^{(1),0}(\psi) &= \log \left\lfloor \braket{\psi | \Delta(\psi) | \psi}^{-1} \right\rfloor\\
  &= \log \left\lfloor \left(\Tr \Delta(\psi)^2 \right)^{-1} \right\rfloor\\
  &= \log \left\lfloor \left( \sum_i |\psi_i|^4 \right)^{-1} \right\rfloor.
\end{aligned}\end{equation}
Explicitly, we have that
\begin{equation}\begin{aligned}
  \psi \to \Psi_m \; \iff \; \sum_i |\psi_i|^4 \leq \frac1m.
\end{aligned}\end{equation}

Notice that the state $\ket\psi \coloneqq \left( \sqrt{\frac58}, \sqrt\frac{3}{16}, \sqrt\frac{3}{16} \right)^T$ from the main text gives an explicit example of a case where coherence monotones under IO and DIO can \textit{increase} in the $\rho$-DIO transformation: specifically, consider the monotone $C_2(\ket\psi) = \sum_{i=2}^{d} |\psi^\downarrow_i|^2$ where $\psi^\downarrow_i$ denote coefficients of $\ket\psi$ arranged in non-increasing order by magnitude \cite{vidal_2000,zhu_2017-1}. For this monotone, with $\ket\psi$ as above, we have $C_2(\ket\psi) = \frac38$ but $C_2(\ket{+}) = \frac12$.

Finally, we remark the additivity of $-\log \Tr \Pi_\rho \Delta(\rho)$ in the sense that
\begin{equation}\begin{aligned}
  - \log \< \Pi_{\rho^{\otimes n}}, \Delta(\rho^{\otimes n}) \> &= - \log \< \Pi_\rho^{\otimes n}, \Delta(\rho)^{\otimes n} \>\\
  & =  - n \log \< \Pi_\rho, \Delta(\rho) \>
\end{aligned}\end{equation}
which in particular gives the asymptotic zero-error distillable coherence as
\begin{equation}\begin{aligned}
  C_{d,\rhodio}^{\infty,0}(\rho) = - \log \< \Pi_\rho, \Delta(\rho) \>.
\end{aligned}\end{equation}

\subsection{Dilution}

\begingroup
\renewcommand\thetheorem{\ref{thm:rhodio_dil}}
\begin{theorem}
The one-shot coherence cost under $\Psi_m$-DIO operations is given by 
\begin{align}
 C_{c,\psimdio}^{(1),\ve}(\rho) \!=\! \log \left\lceil \min \lset R_\Delta (\omega) + 1 \!\bar\! \omega \in \DD,\,F(\rho, \omega) \!\geq\! 1\!-\! \ve \rset \right\rceil.\nonumber
\end{align}
\end{theorem}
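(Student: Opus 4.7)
The plan is to reduce the one-shot dilution cost to an operator-inequality condition on the achievable output states. Starting from the definition
\begin{equation*}
C_{c,\psimdio}^{(1),\ve}(\rho) = \log \min\{ m \in \NN : \exists\, \Lambda \in \psimdio,\; F(\Lambda(\Psi_m), \rho) \geq 1-\ve \},
\end{equation*}
the strategy is to characterise precisely which states $\omega$ arise as $\omega = \Lambda(\Psi_m)$ for some $\Lambda \in \psimdio$. I claim this set equals $\{\omega \in \DD : R_\Delta(\omega) + 1 \leq m\}$. Once this is established, the theorem follows by minimising $m$ over all $\omega$ meeting the fidelity constraint $F(\rho,\omega) \geq 1-\ve$ and rounding up to an integer, which accounts for the ceiling in the formula.

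For the converse direction, the key ingredient is the operator inequality $\Psi_m \leq \id = m\Delta(\Psi_m)$: it holds because $\Psi_m$ is a rank-one projector and because $\Delta(\Psi_m) = \id/m$. Applying any $\Psi_m$-DIO channel $\Lambda$ to both sides and using positivity yields $\omega \leq m\,\Lambda(\Delta(\Psi_m))$, and the defining relation $\Lambda \circ \Delta(\Psi_m) = \Delta \circ \Lambda(\Psi_m)$ immediately converts the right-hand side into $m\Delta(\omega)$. This gives $\omega \leq m\Delta(\omega)$, equivalently $R_\Delta(\omega) + 1 \leq m$.

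For the sufficient direction, given any target $\omega$ with $R_\Delta(\omega) + 1 \leq m$, I would exhibit an explicit $\Psi_m$-DIO channel realising $\Lambda(\Psi_m) = \omega$ via the measure-and-prepare ansatz
\begin{equation*}
\Lambda(X) = \Tr(\Psi_m X)\,\omega + \Tr\bigl((\id-\Psi_m)X\bigr)\,\sigma,
\end{equation*}
with $\sigma := (m\Delta(\omega) - \omega)/(m-1)$ for $m \geq 2$ (the case $m=1$ forces $\omega$ incoherent and is handled by trivial state preparation). The hypothesis $\omega \leq m\Delta(\omega)$ ensures $\sigma \geq 0$, while $\Tr(\omega) = \Tr(\Delta(\omega)) = 1$ gives $\Tr\sigma = 1$, so $\Lambda$ is CPTP. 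A direct computation then yields $\Lambda(\Psi_m) = \omega$ and $\Lambda(\Delta(\Psi_m)) = \Lambda(\id/m) = m^{-1}[\omega + (m-1)\sigma] = \Delta(\omega) = \Delta(\Lambda(\Psi_m))$, verifying the $\psimdio$ property.

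Combining the two bounds identifies the achievable set as exactly $\{\omega : R_\Delta(\omega)+1 \leq m\}$; intersecting with the fidelity constraint and taking the smallest admissible integer $m$ produces $\lceil \min\{R_\Delta(\omega)+1 : F(\rho,\omega) \geq 1-\ve\}\rceil$, and taking the logarithm completes the proof. The main nontrivial step is recognising that the operator inequality $\Psi_m \leq m\Delta(\Psi_m)$ transports through the $\psimdio$ commutation into precisely the $R_\Delta$ bound; once this observation is in place, both the converse and the explicit measure-and-prepare construction for achievability become short computations, and the equality with $C_{c,\DIO}^{(1),\ve}(\rho)$ from \cite{zhao_2018} emerges as a direct consequence.
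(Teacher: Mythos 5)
Your proof is correct, and the converse direction takes a genuinely different route from the paper. The paper first applies an input twirling over permutations to argue that one may restrict without loss of generality to channels of the two-outcome measure-and-prepare form $\Lambda(Q) = \braket{\Psi_m, Q} X + \braket{\id - \Psi_m, Q} Z$, reads off the $\Psi_m$-DIO constraint as $\frac1m \omega + \frac{m-1}{m}\sigma = \Delta(\omega)$, and then identifies the resulting optimization with $R_\Delta$ via an auxiliary function $g$ whose equality with $R_\Delta$ is cited from earlier work. You instead prove necessity directly for an arbitrary $\Psi_m$-DIO channel by transporting the operator inequality $\Psi_m \leq m\Delta(\Psi_m)$ through $\Lambda$ and invoking dephasing covariance on the input, which is exactly the statement that $R_\Delta$ is a $\Psi_m$-DIO monotone evaluated at $R_\Delta(\Psi_m) = m-1$; this bypasses both the twirling reduction and the identification $g = R_\Delta$, making the converse shorter and more self-contained (though it loses the structural information about optimal protocols that the twirling reduction provides). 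Your achievability construction is essentially the same explicit channel implicit in the paper's proof, with the positivity of $\sigma = (m\Delta(\omega)-\omega)/(m-1)$ correctly tied to the hypothesis $R_\Delta(\omega)+1 \leq m$, so the two arguments agree there. The only points you gloss over are minor and shared with the paper: attainment of the minimum over $\omega$ (which follows from compactness of the fidelity ball and the uniform bound $R_\Delta(\omega) \leq \rank\Delta(\omega)-1$) and the degenerate case $m=1$, which you do flag explicitly.
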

\endgroup
\begin{proof}
Using a twirling argument similar to the distillation case, we can without loss of generality limit ourselves to operations of the form $\Lambda = \Lambda \circ \T$, which take the form
\begin{equation}\begin{aligned}
   \Lambda(Q) &= \< \Psi_m, Q \> X + \< \id - \Psi_m, Q \> Z
\end{aligned}\end{equation}
for some operators $X, Z$. The completely positivity and trace preservation of $\Lambda$ impose that $X, Z$ are valid quantum states. To impose that $\Lambda \in \Psi_m$-DIO, we have that
\begin{equation}\begin{aligned}
  \Delta(X) &= \Delta \circ \Lambda (\Psi_m)\\
  &= \Lambda \circ \Delta (\Psi_m)\\
  &= \frac1m X + \frac{m-1}{m} Z.
\end{aligned}\end{equation}
Noticing that $\Lambda(\Psi_m) = X$, this means that the set of states $\omega$ such that $\omega = \Lambda(\Psi_m)$ for some $\Psi_m$-DIO protocol $\Lambda$ is precisely the set of states for which there exists a state $\sigma$ such that $\frac1m \omega + \frac{m-1}{m} \sigma = \Delta(\omega)$. It is easy to see that this is only possible if $\Delta(\sigma) = \Delta(\omega)$. Defining the function
\begin{equation}\begin{aligned}
  g(\omega) \coloneqq \min \lsetr \lambda \barr \frac{\omega + \lambda \sigma}{1+\lambda} \in \I,\; \sigma \in \DD,\; \Delta(\sigma) = \Delta(\omega) \rsetr
\end{aligned}\end{equation}
it is not difficult to show that, in fact, $g(\omega) = R_\Delta(\omega)$~\cite{chitambar_2016-1}. We then have that the one-shot coherence cost under $\Psi_m$-DIO is
\begin{equation}\begin{aligned}
  &C_{c,\psimdio}^{(1),\ve}(\rho) \\
  = &\log \left\lceil \min \lset m \in \mathbb R \bar \omega \in \DD,\; F(\rho, \omega) \geq 1\!-\! \ve,\; R_\Delta (\omega) \leq \!m\!-\!1 \rset \right\rceil\\
  = &\log \left\lceil \min \lset R_\Delta (\omega) + 1 \bar \omega \in \DD,\;F(\rho, \omega) \geq 1- \ve \rset \right\rceil
\end{aligned}\end{equation}
as required.
\end{proof}

\subsubsection{Exact dilution}

Zero-error dilution can be characterised straightforwardly as the required coherence cost is simply
\begin{equation}\begin{aligned}
  C_{c,\psimdio}^{(1),0}(\rho) = \logceil{ \log ( R_\Delta(\rho) + 1) } .
\end{aligned}\end{equation}
Note that
\begin{equation}\begin{aligned}
  R_\Delta(\rho) &= \norm{\Delta(\rho)^{-1/2} \rho \Delta(\rho)^{-1/2} }{\infty}-1
\end{aligned}\end{equation}
which makes this quantity easy to express and compute. In particular, noting that for every state we have $\rho \leq \rank(\Delta(\rho)) \Delta(\rho)$~\cite{hayashi_2001}, this straightforwardly shows that the maximally coherent state $\Psi_m$ can be transformed into any other state with $\rank\Delta(\rho) \leq m$, and it acts as a ``golden unit'' under the operations $\Psi_m$-DIO. For a pure state, we have exactly \cite{chitambar_2016-1} 
\begin{equation}\begin{aligned}
  R_\Delta(\psi) = \rank \Delta(\psi) - 1
\end{aligned}\end{equation}
which further simplifies the characterisation.

We remark the additivity of $\log ( R_\Delta(\rho) + 1)$, immediately establishing that the asymptotic zero-error coherence cost under $\Psi_m$-DIO is given by
\begin{equation}\begin{aligned}
  C_{c,\psimdio}^{\infty,0}(\rho) = \log ( R_\Delta(\rho) + 1).
\end{aligned}\end{equation}

\subsection{General transformations}

We can generalise the approach of coherence dilution to give a simple sufficient condition for transformations to general states. For any $\omega \in \DD$, we have the following.
\begingroup
\renewcommand\thetheorem{\ref{prop:suff_Rdelta}}
\begin{proposition}
If $R_\Delta(\omega) + 1 \leq \frac{1}{\<\Pi_\rho, \Delta(\rho) \>}$, then there exists a $\rho$-DIO map such that $\Lambda(\rho) = \omega$.
\end{proposition}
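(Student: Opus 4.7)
The plan is to build the desired $\rho$-DIO channel by direct construction rather than by composing distillation with dilution. The natural first attempt---distill $\rho$ to some $\Psi_m$ via Theorem~\ref{thm:rhodio_dis} and then dilute $\Psi_m$ to $\omega$ via Theorem~\ref{thm:rhodio_dil}---appears to fail because of integer rounding: the hypothesis only bounds $R_\Delta(\omega)+1$ above by $1/\Tr\Pi_\rho\Delta(\rho)$, and there need not be an integer $m$ lying between $\lceil R_\Delta(\omega)+1\rceil$ and $\lfloor 1/\Tr\Pi_\rho\Delta(\rho)\rfloor$. I would therefore engineer a single measure-and-prepare channel tailored to the pair $(\rho,\omega)$, reusing the structural template that emerged in the proof of Theorem~\ref{thm:rhodio_dis}.

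Set $p := 1/(R_\Delta(\omega)+1)$. The defining inequality $\omega \leq (R_\Delta(\omega)+1)\,\Delta(\omega)$ rearranges to $\Delta(\omega) - p\,\omega \geq 0$, so
\[
\tau := \frac{\Delta(\omega) - p\,\omega}{1-p}
\]
is a positive, unit-trace operator whenever $p<1$; if $p=1$ (i.e.\ $\omega$ is already incoherent) I simply set $\tau := \omega$. Next I would pick a Hermitian effect $X$ with $\Pi_\rho \leq X \leq \id$ and $\Tr(X\,\Delta(\rho))=p$. As $X$ sweeps the segment $\Pi_\rho + t(\id-\Pi_\rho)$ for $t\in[0,1]$, the overlap $\Tr(X\,\Delta(\rho))$ interpolates continuously between $\Tr\Pi_\rho\Delta(\rho)$ and $1$, so the hypothesis $p\in[\Tr\Pi_\rho\Delta(\rho),1]$ guarantees the existence of such an $X$.

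With these pieces I would define the measure-and-prepare channel
\[
\Lambda(Q) := \Tr(X\,Q)\,\omega + \Tr((\id-X)\,Q)\,\tau,
\]
which is manifestly CPTP. The verification reduces to two linear checks: $\Pi_\rho\leq X\leq\id$ forces $\Tr(X\rho)=1$, giving $\Lambda(\rho)=\omega$; and the definition of $\tau$ combined with the choice of $p$ yields $\Lambda(\Delta(\rho)) = p\,\omega + (1-p)\,\tau = \Delta(\omega) = \Delta(\Lambda(\rho))$, so $\Lambda$ is $\rho$-DIO. The main conceptual obstacle is recognising that the correct shape for $\Lambda$ is a two-outcome measure-and-prepare channel whose prepared states are $\omega$ and the complementary state $\tau$ dictated by $R_\Delta$; once that template is in place, the hypothesis is exactly what is needed to interpolate the effect $X$ between the distillation-optimal choice $\Pi_\rho$ and the trivial effect $\id$, and everything else is bookkeeping.
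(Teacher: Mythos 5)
Your construction is correct and takes essentially the same approach as the paper: both proofs exhibit a two-outcome measure-and-prepare channel whose verification rests on the same two facts, namely that an effect dominating $\Pi_\rho$ sends $\rho$ to $\omega$ with certainty, and that the hypothesis $R_\Delta(\omega)+1\leq \langle\Pi_\rho,\Delta(\rho)\rangle^{-1}$ lets one write $\Delta(\omega)$ as the required convex combination of $\omega$ and a complementary state. The only cosmetic difference is where the slack in the inequality is absorbed---the paper fixes the effect at $\Pi_\rho$ and loosens the second prepared state to a $\sigma$ satisfying $\omega+\bigl(\langle\Pi_\rho,\Delta(\rho)\rangle^{-1}-1\bigr)\sigma=\langle\Pi_\rho,\Delta(\rho)\rangle^{-1}\Delta(\omega)$, whereas you fix the extremal complementary state $\tau$ and interpolate the effect $X$ between $\Pi_\rho$ and $\id$.
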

\endgroup
\begin{proof}
Recalling that $R_\Delta(\omega) = \min \lset \lambda \bar \omega \leq (\lambda+1) \Delta(\omega) \rset$, it is easy to see that for any $\lambda \geq R_\Delta(\omega)+1$, there exists a state $\sigma$ such that $\omega + (\lambda-1) \sigma = \lambda \Delta(\omega)$. By assumption, there exists in particular $\sigma \in \DD$ which satisfies
\begin{equation}\begin{aligned}
   \omega + \left(\frac{1}{\< \Pi_\rho, \Delta(\rho) \>}-1\right) \sigma = \frac{1}{\< \Pi_\rho, \Delta(\rho) \>} \Delta(\omega).
 \end{aligned}\end{equation} 
With this choice of $\sigma$, define the map
\begin{equation}\begin{aligned}
  \Lambda(X) = \< \Pi_\rho , X\> \omega + \< \id - \Pi_\rho, X \> \sigma.
\end{aligned}\end{equation}
This map is clearly CPTP, and we have
\begin{equation}\begin{aligned}
  \Lambda(\rho) &= \omega,\\
  \Lambda(\Delta(\rho)) &= \< \Pi_\rho, \Delta(\rho) \> \omega + ( 1 - \< \Pi_\rho, \Delta(\rho) \> ) \sigma\\
  &= \< \Pi_\rho, \Delta(\rho) \> \left( \omega + \left[ \frac{1}{\< \Pi_\rho, \Delta(\rho) \>} - 1 \right] \sigma \right)\\
  &= \Delta(\omega)
\end{aligned}\end{equation}
as required.
\end{proof}

The above condition is in general not necessary for $\rho$-DIO transformations, and indeed in general $R_\Delta(\rho) + 1 > \frac{1}{\<\Pi_\rho, \Delta(\rho) \>}$ which means that even the trivial transformation $\rho \to \rho$ might not satisfy the condition of the Proposition.

When the input state is a qubit, however, the transformations can be characterized exactly (see also \cite{alberti_1980,heinosaari_2012}). We will in particular establish an equivalence between $\rho$-DIO and DIO in such transformations.
\begingroup
\renewcommand\thetheorem{\ref{prop:qubit_trans}}
\begin{proposition}
For any single-qubit states $\rho$ and $\sigma$, the transformation $\rho \to \sigma$ is possible under $\rho$-DIO if and only if it possible under DIO.
\end{proposition}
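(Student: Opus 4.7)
The $(\Leftarrow)$ implication is immediate since $\DIO\subseteq\rhodio$. For the nontrivial $(\Rightarrow)$ direction, my plan is to invoke the qubit characterization of~\cite{chitambar_2016-1} quoted in the statement: it suffices to establish the two monotone inequalities $R_\Delta(\rho)\geq R_\Delta(\sigma)$ and $\norm{\rho}{\ell_1}\geq\norm{\sigma}{\ell_1}$ from the existence of any $\rho$-DIO channel $\Lambda$ with $\Lambda(\rho)=\sigma$. The key observation enabling both bounds is that such a $\Lambda$ automatically satisfies $\Lambda(\Delta(\rho))=\Delta(\Lambda(\rho))=\Delta(\sigma)$, so the whole pair $(\rho,\Delta(\rho))$ is sent to the pair $(\sigma,\Delta(\sigma))$.

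For the $R_\Delta$ inequality, I would apply the positive map $\Lambda$ to the defining L\"owner inequality $\rho\leq(R_\Delta(\rho)+1)\Delta(\rho)$, obtaining $\sigma\leq(R_\Delta(\rho)+1)\Delta(\sigma)$, and conclude $R_\Delta(\sigma)\leq R_\Delta(\rho)$ from the minimality in the definition of $R_\Delta$. For the $\ell_1$ inequality, I would exploit the qubit-specific identity $\norm{\rho}{\ell_1}=1+\|\rho-\Delta(\rho)\|_1$, which holds because the traceless Hermitian off-diagonal operator $\rho-\Delta(\rho)$ has spectrum $\{\pm|\rho_{01}|\}$. Since $\Lambda(\rho-\Delta(\rho))=\sigma-\Delta(\sigma)$ by the $\rho$-DIO property, and the trace norm is contractive under CPTP maps acting on Hermitian operators, it follows that $\|\rho-\Delta(\rho)\|_1\geq\|\sigma-\Delta(\sigma)\|_1$ and hence $\norm{\rho}{\ell_1}\geq\norm{\sigma}{\ell_1}$.

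I do not anticipate a genuinely hard step: the whole argument boils down to the observation that the two monotones that fully characterize DIO transformations in dimension two can each be rewritten as a quantity depending only on the pair $(\rho,\Delta(\rho))$, so the weaker $\rho$-DIO constraint --- which by definition preserves exactly this pair --- already suffices to enforce their monotonicity. This coincidence is specifically two-dimensional; it is consistent with, and in fact complemented by, the strict separation between $\rho$-DIO and DIO in higher dimensions exhibited in Sec.~\ref{sec:distillation}.
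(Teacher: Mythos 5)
Your proposal is correct and follows essentially the same route as the paper's own proof: reduce to the two monotones $R_\Delta$ and $\norm{\cdot}{\ell_1}$ from the qubit DIO characterization of~\cite{chitambar_2016-1}, and show each is a $\rho$-DIO monotone, the latter via the identity $\norm{\rho}{\ell_1} = 1 + \norm{\rho - \Delta(\rho)}{1}$ and trace-norm contractivity. The only cosmetic difference is that you spell out the L\"owner-inequality argument for the monotonicity of $R_\Delta$, which the paper merely cites as previously discussed.
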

\endgroup
\begin{proof}
Clearly, if a DIO transformation $\rho\to\sigma$ exists, then so does a $\rho$-DIO transformation by the inclusion DIO $\subseteq \rho-$DIO. By~\cite[Thm. 30]{chitambar_2016-1}, the DIO transformation is possible if and only if $R_\Delta(\rho) \geq R_\Delta(\sigma)$ and $\norm{\rho}{\ell_1} \geq \norm{\sigma}{\ell_1}$. Since $R_\Delta$ is trivially a $\rho$-DIO monotone as discussed earlier, it suffices to show that $\norm{\cdot}{\ell_1}$ is a $\rho$-DIO monotone, which will mean that the existence of a $\rho$-DIO transformation implies the existence of a DIO transformation.

To see that this is indeed true, note that $\norm{\rho - \Delta(\rho)}{1}$ is clearly a $\rho$-DIO monotone due to the contractivity of the trace distance under CPTP maps. But for a single-qubit state $\rho = \begin{pmatrix}\rho_{00} & \rho_{01}\\\rho_{01}^* & \rho_{11}\end{pmatrix}$, we have $\norm{\rho - \Delta(\rho)}{1} = 2 \left|\rho_{01}\right| = \norm{\rho}{\ell_1} - 1$, which means that $\norm{\rho}{\ell_1}$ is also a $\rho$-DIO monotone for all single-qubit states.
\end{proof}

\end{document}